\pdfoutput=1
\documentclass[10pt,twocolumn]{article}
\usepackage[margin=1in]{geometry}
\usepackage{amsmath,amssymb,amsthm,bm,mathtools}
\usepackage{graphicx}
\usepackage{booktabs}
\usepackage{microtype}
\usepackage[numbers,sort&compress]{natbib}
\usepackage[colorlinks,linkcolor=blue!60!black,citecolor=blue!60!black,urlcolor=blue!60!black]{hyperref}
\usepackage{xcolor}
\newtheorem{proposition}{Proposition}
\newtheorem{lemma}{Lemma}
\newcommand{\Tr}{\mathrm{Tr}}
\newcommand{\Z}{\mathbb Z}
\newcommand{\GSD}{\mathrm{GSD}}
\newcommand{\Had}{\mathsf H}

\title{\bfseries\fontsize{14.4}{18}\selectfont An exactly solvable obstructed seam on a lattice Klein bottle:\\ bulk-locked Majorana rings and register-dependent toric codes}
\author{Junhao (Eigen) Wang\\[2pt] \small Department of Mathematics, New York University, New York, NY, USA\\ \small \texttt{jw10111@nyu.edu}}
\date{\small Preprint --- 19 October 2026}

\begin{document}
\maketitle

\begin{abstract}
\noindent
We study a frustrated seam produced by literal reflection gluing of Wen's plaquette model on a closed lattice Klein bottle. The seam operators anticommute along two cycles for even circumference and one cycle of twice the length for odd circumference. Their algebra reduces to free Majorana rings, while the commuting bulk stabilizers constrain the allowed ring sectors. This bulk locking yields exact ground energies, excitation gaps and degeneracies, checked by phase-aware Pauli traces and finite-size diagonalization, including two 24-qubit energy predictions. The seam degeneracy is not topological: the single-site seam fields tested split it at second order. We contrast this obstructed gluing with consistent toric-code realizations. An edge register preserves the anyon labels; a Hadamard-dressed site register has colour-dependent seam transport, with ground-state degeneracy 2 or 4 for even circumference and a sign-frustrated exception at odd circumference. For the enumerated lattice isometries and sizes, we compute logical gates and distances, distinguishing abstract qubit relabelling from local hardware depth. Finally, on three 16- and 24-qubit lattices, chiral matter selects a holonomy doublet whose electric-field splitting follows two competing tunnelling routes. These matter results remain finite-size statements. The main result is the exact obstructed-seam solution with bulk sector locking; the register, gate and matter calculations delimit its relation to topological codes.
\end{abstract}

\section{Introduction}\label{sec:intro}

We study the local Hamiltonian obtained when the reflected corners of Wen's plaquette model are identified literally on a closed lattice Klein bottle. The identification is incompatible with the plaquette symmetry, but it defines an exactly solvable frustrated seam. Our central result is the reduction of this seam to Majorana rings whose allowed sectors are locked by the commuting bulk stabilizers. The consistent edge and site registers provide the comparison that isolates this obstruction; their logical gates and a finite matter-coupled example are developed as related consequences.

Twist defects of the toric code---the end points of lines across which the electric and magnetic anyons $e$ and $m$ are exchanged---bind Majorana modes and realize Ising-type non-Abelian statistics \cite{Bombin2010,YouWen2012,BarkeshliJianQi2013,TeoRoyChen2014}. On an orientable lattice such lines are put in by hand, by dislocations or by an explicit change of the stabilizers along a cut. A non-orientable surface offers a second route: a reflection gluing may or may not act on the anyon labels, and when it does, the seam is a duality twist line that closes on itself without end points. The general theory of topological order on non-orientable surfaces relates the ground-state degeneracy to the anyons left invariant by the parity action \cite{ChanTeoRyu2016}, and cross-cap defects have recently been shown to implement fault-tolerant logical gates in the surface code \cite{Kobayashi2024crosscap}. Lattice Klein bottles have at the same time become a common object in the theory of symmetry-protected and gauge-theoretic phases: the momentum-space Brillouin Klein bottle of $\pi$-flux lattices \cite{ChenYangZhao2022,XiaoZhao2026}, the real-space Klein-bottle quotients of staggered fermions with their magnetic translation algebra \cite{SeibergZhang2026}, and symmetry defects on Klein bottles as probes of SPT responses \cite{RavindranYangChen2026}.

We ask whether identifying the top and bottom rows of an $L_x\times L_y$ periodic $\Z_2$ toric-code strip by $x\mapsto a-x$ produces a Klein-bottle seam that exchanges $e$ and $m$. We find that the question has no answer at the level of the topological order alone: it is decided by the register---by whether the code is written with qubits on the edges (Kitaev's form \cite{Kitaev2003}, stars and plaquettes) or with qubits on the sites (Wen's plaquette model \cite{Wen2003plaquette}, $F_p=X_{SW}Z_{SE}X_{NE}Z_{NW}$). The two registers describe the same topological order and are related by a local unitary, but lattice reflections act on them differently: in the edge register a reflection maps stars to stars and plaquettes to plaquettes; in the site register a reflection reverses the cyclic $X,Z,X,Z$ pattern of every plaquette and is a symmetry only when composed with the global Hadamard $\Had^{\otimes N}$, which restores the corner pattern; colour transport separately decides whether the seam exchanges $e$ and $m$ (Sec.~\ref{sec:register}). This is the lattice form of a symmetry-enrichment statement: the register fixes the permutation action of the reflection on anyons, and the Klein bottle inherits it.

The exact obstructed-seam solution is the central result, item (iii) below; the register comparison establishes its setting, and items (iv)--(v) give code and finite-matter consequences. (i) In the edge register every reflection gluing closes into a stabilizer code with $\GSD=4$, for every axis offset and either parity of $L_y$ (Sec.~\ref{sec:register}). (ii) In the site register the Hadamard-dressed gluing closes, for even $L_x$, with $\GSD=2$ or $4$, decided by a colour-holonomy rule---whether the plaquette two-colouring that labels $e$ and $m$ is broken across the seam---which also reproduces the known degeneracy of the model on odd tori and on the sheared torus and extends to a fixed-point-free glide quotient (for odd $L_x$ the dressed plaquettes commute but multiply to $-1$: the model is frustrated by a global sign, Sec.~\ref{sec:register}); the values $4$ and $2$ are the numbers of parity-invariant anyons for a trivial and for an $e\leftrightarrow m$ parity action \cite{ChanTeoRyu2016}. (iii) The literal, undressed Pauli gluing in the site register never closes, and the local Hamiltonian it defines is exactly solvable (Sec.~\ref{sec:seam}): the $2L_x$ plaquette operators adjacent to the seam anticommute along two cycles of length $L_x$ (even $L_x$) or one cycle of length $2L_x$ (odd $L_x$), each cycle is the algebra of a free Majorana ring, the bulk stabilizers lock the rings to the antiperiodic sector, and the ground energy, the gap, the low spectrum and the ground-state degeneracy follow in closed form. Two predictions at 24 qubits, $E_0=-20$ for the mirror gluing and $E_0=-12-4\sqrt3$ for the glide quotient with odd shift, are confirmed to $10^{-13}$. The seam degeneracy is exact for the frustrated model but not topological: single-site fields on the seam rows lift it at second order. (iv) The consistent gluings are Klein-bottle codes (Sec.~\ref{sec:codes}), and the audited lattice-isometry family implements logical gates: the half-period translation along the seam, which on a Klein bottle is not isotopic to the identity, acts on the two logical qubits as the Dehn twist about the two-sided cycle---a logical $\mathrm{SWAP}$ in the basis in which each logical qubit lives on one cross-cap---and, in the site register only, the colour-exchanging isometries act as $\Had\otimes\Had$. Qubit relabellings and single-row Hadamards thus generate $\Z_2\times\Z_2$, the group generated by the mapping-class action on $H_1(K;\Z_2)$ and by $e\leftrightarrow m$; the edge register gets only the $\mathrm{SWAP}$, and the twisted code gets only logical Paulis from the audited family. We compute all logical operators and distances exactly, up to $48$ qubits, for even $L_x$. (v) Coupling the gauge field to the chiral matter of the companion paper~\cite{Wang2026paper1} (Sec.~\ref{sec:matter}) on three small lattices, the matter splits the four holonomy sectors $2+2$ and selects the antiperiodic pair, a two-fold degenerate doublet with $\bar Z$ a seam-crossing Wilson loop and $\bar X$ the half translation. Its splitting under an electric field is of order $h^{\min(L_x,2L_y)}$, because the one-sided cycle, although shorter, must be traversed twice to return to the selected pair; the leading-order coefficient is computed exactly and matches exact diagonalization to better than one per cent, and the powers $4$ and $6$ are seen on $16$- and $24$-qubit lattices. These are finite-size statements, and we do not extrapolate them.

We are explicit about what is and is not new. That suitable lattice symmetries can act as $e\leftrightarrow m$ in Wen's model, that such lines carry Majorana degrees of freedom, and that the Klein-bottle degeneracy counts parity-invariant anyons are known \cite{Wen2003plaquette,Bombin2010,YouWen2012,TeoRoyChen2014,ChanTeoRyu2016}. That the toric code on a surface with cross-caps admits a fault-tolerant Hadamard from $e\leftrightarrow m$ is the result of Ref.~\cite{Kobayashi2024crosscap}, and that Dehn twists can be implemented by qubit permutations with constant-depth corrections is known \cite{Zhu2020dehn,Lavasani2019}. What we add is the register comparison on the same lattice Klein bottle with exact stabilizer counts, the single colour-holonomy rule for the dressed gluings, the anatomy of the obstructed gluing---a closed, end-point-free seam whose algebra is one or two locked Majorana rings according to the parity of $L_x$, with exact formulas verified on the lattice---and the table of logical gates implemented by the enumerated deck-compatible affine isometries on the three Klein-bottle codes at the stated sizes, in which the Dehn twist is a rigid translation and the register decides whether the cross-cap Hadamard is available at all. The model that motivates this work---the same $\pi$-flux lattice with $p$-wave pairing, which on the same Klein bottle is forced to form a chirality wall along the seam---lives in the site register; Sec.~\ref{sec:matter} couples it to the gauge field on the lattices we can reach exactly, and Sec.~\ref{sec:discussion} says what is and is not established by that.

\section{Two registers and three gluings}\label{sec:model}

\begin{figure*}[!t]
\centering\includegraphics[width=0.98\textwidth]{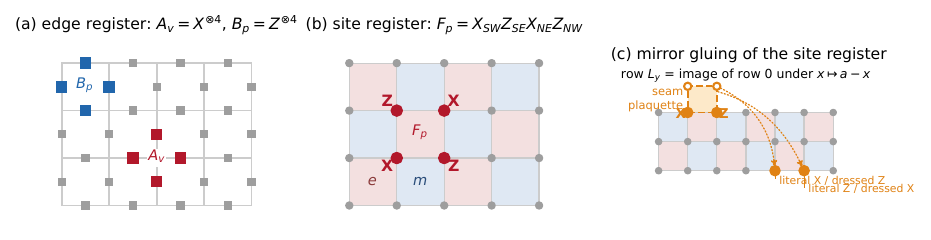}
\caption{The two registers and the mirror seam. (a) Edge register: qubits on the edges of the square lattice, stabilizers $A_v=X^{\otimes4}$ on the four edges at a vertex and $B_p=Z^{\otimes4}$ around a plaquette. (b) Site register: qubits on the sites, one operator $F_p=X_{SW}Z_{SE}X_{NE}Z_{NW}$ per plaquette; the plaquettes are two-coloured by $(x+y)\bmod2$ and the two colours carry the two anyon types. (c) Mirror gluing of the site register: the row $y=L_y$ is the image of row $0$ under $x\mapsto a-x$, so the two top corners of a plaquette of row $L_y-1$ land on the sites $a-x-1$ and $a-x$ of row $0$. The literal gluing keeps $X$ on the image of the NE corner and $Z$ on the image of the NW corner and is obstructed; the dressed gluing swaps them (the identification includes the Hadamard) and closes.}
\label{fig:registers}
\end{figure*}

\paragraph{Lattices and registers.} We use an $L_x\times L_y$ square lattice with sites $(x,y)$, $x\in\Z_{L_x}$ periodic and $y=0,\dots,L_y-1$, closed along $y$ by one of the gluings below. In the \emph{edge register} a qubit sits on every edge; the stabilizers are the vertex stars $A_v$ ($X$ on the four edges at $v$) and the plaquettes $B_p$ ($Z$ on the four edges of $p$), $H=-\sum_vA_v-\sum_pB_p$ \cite{Kitaev2003}. In the \emph{site register} a qubit sits on every site and there is one operator per plaquette,
\begin{equation}
F_p=X_{SW}\,Z_{SE}\,X_{NE}\,Z_{NW},\qquad H=-\sum_pF_p ,
\label{eq:wen}
\end{equation}
with the corners of $p=(x,y)$ at $(x,y),(x+1,y),(x+1,y+1),(x,y+1)$ \cite{Wen2003plaquette}. On the torus all $F_p$ commute; each site belongs to four plaquettes, and the plaquettes are two-coloured by $(x+y)\bmod2$. A violated plaquette of one colour is the $e$ anyon and of the other colour the $m$ anyon; a one-site translation exchanges the colours, which is the familiar statement that translation acts as $e\leftrightarrow m$ in this model \cite{Wen2003plaquette,TeoRoyChen2014}. The two registers describe the same bulk topological order; Definition~\ref{def:register} specifies the microscopic symmetry lifts as part of the register data. An identification of the bulk orders alone does not identify those lifts.

\paragraph{Gluings.} Row $L_y$ is identified with row $0$ through a map $s:\Z_{L_x}\to\Z_{L_x}$ of the column index: the torus, $s(x)=x$; the \emph{sheared torus}, $s(x)=x+L_x/2$ (orientation preserving); the \emph{mirror Klein bottle}, $s(x)=a-x$ with an axis offset $a\in\Z_{L_x}$ (site-centred for $a$ even, bond-centred for $a$ odd; orientation reversing); and, as a fixed-point-free alternative, the \emph{glide quotient} (even $L_x$ only): the torus $L_x\times2L_y$ modulo $g:(x,y)\mapsto(x+h,-y)$, $h=L_x/2$, whose fundamental domain consists of the full rows $y=1,\dots,L_y-1$ and the two half rows $y=0$ and $y=L_y$ (in which $(x,y)\sim(x+h,y)$), with $n=L_xL_y$ qubits like the others. In the edge register the wrap is applied to the vertical edges leaving row $L_y-1$ and to the top edges of the plaquettes of that row; the stabilizers are defined as qubit sets and no sign ambiguity arises. In the site register the plaquettes of row $L_y-1$ have their NE and NW corners at the sites $s(x+1)$ and $s(x)$ of row $0$ (Fig.~\ref{fig:registers}c). For the orientation-preserving gluings the cyclic order of the corners is preserved and $F_p$ is defined as in Eq.~\eqref{eq:wen}. For a reflection the cyclic order is reversed on the far side, and two definitions present themselves: the \emph{literal} gluing, which keeps $X$ on the image of the NE corner and $Z$ on the image of the NW corner---the $X,Z,X,Z$ pattern as read on the quotient plaquette---and the \emph{dressed} gluing, which conjugates the two far corners by the Hadamard, $X\leftrightarrow Z$. The dressed gluing is the quotient of the parent torus by the operation $G_1=\Had^{\otimes N}\circ(\text{reflection})$; we show in Sec.~\ref{sec:register} that $G_1$, and not the bare reflection, is the symmetry of the parent model, so that the dressed gluing is the symmetry quotient and the literal gluing does not implement a symmetry lift of these plaquette checks.

\paragraph{Methods.} Consistency, rank and degeneracy of a set of Pauli operators are exact linear algebra over $\mathrm{GF}(2)$: with each operator a row $(x|z)\in\mathrm{GF}(2)^{2n}$, two operators anticommute iff the symplectic product $x\cdot z'+z\cdot x'$ is odd, and a mutually commuting set $S$ of rank $r$ has $\GSD=2^{n-r}$ provided the group it generates does not contain $-1$, i.e.\ every $\mathrm{GF}(2)$ relation among the rows holds with phase $+1$; we compute the phase of every relation (a basis of the left null space of the row matrix, multiplied out as Pauli strings with phases), because the rank alone is blind to a relation of phase $-1$, in which case no state satisfies all operators and the Hamiltonian is frustrated. For an inconsistent set we record the number of anticommuting pairs, the graph they form (the anticommutation graph, vertices = operators) and the $\mathrm{GF}(2)$ rank of its adjacency matrix $\mathsf A$. Operator identities are evaluated with Pauli strings carrying their phases. Spectra come from exact diagonalization of the $2^n$-dimensional Hamiltonian (dense for $n\le13$, Lanczos otherwise; the Hamiltonians contain only $X$ and $Z$ and are real). Every number quoted below is produced by a script of the reproduction package (Appendix~\ref{app:code}).

\theoremstyle{definition}
\newtheorem{definition}{Definition}
\theoremstyle{remark}
\newtheorem{remark}{Remark}
\theoremstyle{plain}

\begin{definition}[Register and a lift of the gluing]\label{def:register}
Fix an Abelian topological order with fusion-label group $\mathcal A$ and braiding pairing, and fix the geometric identification $s$ on the covering lattice. A \emph{register realization} specifies a local tensor-product Hilbert space, its bulk check algebra, an identification of string-endpoint sectors with $\mathcal A$, and the microscopic lifts of the lattice identifications under consideration. An admissible lift $\widetilde s_R$ acts on local operators by the geometric map together with any specified internal dressing, preserves the bulk check constraints and their vacuum sector, and induces an automorphism $\rho_R(s)$ of the fusion labels. The \emph{seam action} is $P=\rho_R(s)$: a string of label $b$ continued across the seam has label $Pb$. The labels on both sides and the direction of crossing are fixed; for an orientation-reversing identification the braiding pairing is conjugated. Preservation of the check constraints need not mean preservation of a particular choice of Hamiltonian coefficients. The lift acts on the covering lattice and need not square to the identity: its square can include an internal action, such as charge conjugation $C:b\mapsto-b$.
\end{definition}

Thus ``edge'' and ``site'' name microscopic realizations of a symmetry action, not merely locations of qubits. The lift, label convention, offset and any translation in the gluing must be specified. For the edge code the reflection preserves the star and plaquette sectors. For the site code the Hadamard dressing restores the reversed corner pattern; the geometric colour transport then decides whether the continued label is $e$ or $m$. An attempted literal identification that does not preserve the local checks is not an admissible lift in Definition~\ref{def:register}. In this definition and Proposition~\ref{prop:register}, $P$ acts on anyon labels; it is distinct from the anticommutation adjacency matrix denoted by $\mathsf A$ in the methods and in Sec.~\ref{sec:seam}.

\section{Consistency and degeneracy: the role of the register}\label{sec:register}

\begin{proposition}[A register lift determines the marked seam transport]\label{prop:register}
Consider two register realizations $R_1,R_2$ of the same Abelian topological order, with a fixed common identification of their anyon labels and the same geometric seam $s$. Suppose each has an admissible lift as in Definition~\ref{def:register}, inducing $P_i=\rho_{R_i}(s)$. The matching rule for a string across the seam in realization $R_i$ is $b\mapsto P_i b$. If $P_1\ne P_2$, the two seams have different matching rules in this fixed labelling. An orientation-reversing loop with no other label monodromy can close with a fixed label only for $b\in\ker(1-P_i)$. These statements hold at every size at which the stated local realization and lift exist; they require no enumeration of finite lattices.
\end{proposition}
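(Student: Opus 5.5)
The plan is to argue directly from Definition~\ref{def:register}; the statement is essentially structural, so I will unpack what the admissible lift provides and show that each claim follows without any finite-lattice enumeration.

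\emph{Matching rule.} Fix $R_i$ and its admissible lift $\widetilde s_{R_i}$. Consider a string operator of label $b$ that terminates just below the seam, in row $L_y-1$. Since the lift preserves the bulk check constraints and the vacuum sector, conjugating the string's endpoint by $\widetilde s_{R_i}$ gives an operator that creates an endpoint excitation on the far side. That excitation lies in a definite superselection sector, and by definition this sector is $\rho_{R_i}(s)b=P_ib$. The string continued across the seam is precisely the product of the near segment with the lifted far segment. To be sure it commutes with every check straddling the seam, I will check that the straddling checks are, by construction, the images of bulk checks under the lift: this is exactly the content of ``admissible''. With that in hand, the only allowed continuation has label $P_ib$. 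The main subtlety in this step is to make sure the label is well defined, that is, independent of the choice of string representative. This follows because $\rho_{R_i}(s)$ is an automorphism of fusion labels: representatives that differ by local operators and closed loops have the same label in the bulk, and the lift preserves that equivalence. The fixed conventions in the definition (labels on both sides, direction of crossing) remove any remaining ambiguity, including the conjugation of the braiding pairing under orientation reversal.

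\emph{Distinctness.} The common identification of anyon labels in $R_1$ and $R_2$ is fixed. So if $P_1b\ne P_2b$ for some $b$, the same incoming label leaves the seam with different labels in the two realizations, and the matching rules differ. This is a tautology once the first step shows that the rule is $P_i$.

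\emph{Closing condition.} Follow a string of label $b$ once around an orientation-reversing loop that crosses the seam once. By hypothesis the loop carries no other monodromy, so the label after one circuit is $P_ib$. A closed string with a fixed label, meaning no endpoint excitation at the junction, requires $P_ib=b$, that is, $b\in\ker(1-P_i)$, writing $\mathcal A$ additively. Conversely, any such $b$ matches up at the junction.

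\emph{Size independence.} Every step uses only the local data supplied by the definition: the lift, the local checks and the sector identification. Therefore the argument holds at every size at which these data exist.
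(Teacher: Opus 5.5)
Your proposal is correct and follows the paper's proof essentially step by step. Both arguments transport the endpoint sector by conjugating with the admissible lift, which gives $P_ib$ by definition. Distinctness is then immediate once the rule is $b\mapsto P_ib$, and closure requires $P_ib=b$ because there is no other monodromy. The paper stays purely at the level of sector labels, adding only that fusion makes $P_i$ a homomorphism and that reversing the crossing gives $P_i^{-1}$. Your remark about checks straddling the seam is therefore not needed. Your converse should be read only as label matching at the junction, not as the existence of a frustration-free closed string or code space; the paper explicitly disclaims that.
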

\begin{proof}
Conjugating a string segment by $\widetilde s_{R_i}$ transports its endpoint charge to $P_i b$ by the definition of the induced action on string sectors. Gluing matches the endpoint to this transported sector. Fusion of strings makes $P_i$ a homomorphism, and reversing the crossing gives its inverse. If $P_1\ne P_2$, some label has different images, so the marked matching rules differ. With no other monodromy, closing the transported label requires $P_i b=b$, which is precisely the stated kernel condition. This proves the transport statement, not the existence or dimension of a global frustration-free code space.
\end{proof}

\begin{remark}[Proof, grid checks, and open scope]\label{rem:register-scope}
\emph{Proved.} Proposition~\ref{prop:register} is a size-independent statement about label transport under its hypotheses. It does not classify all microscopic seams. In particular, different matrices in different label conventions need not describe inequivalent seams: a relabelling $Q$ changes $P$ to $QPQ^{-1}$. Nor does $P$ specify the phases of stabilizer relations.

\emph{Verified on the grid.} The concrete degeneracy entries of Table~\ref{tab:gsd} retain their stated ranges, and the distance and gate data retain the ``at the sizes computed'' qualifications of Sec.~\ref{sec:codes}. 

\emph{Conditional comparison.} The parity-invariant-anyon count \cite{ChanTeoRyu2016}, written as
\[
 \GSD(K;P)=|\ker(1-P)|,
\]
is a separate comparison for a consistent parity-twisted topological code to which that counting construction applies. It is not a consequence of the transport proof alone, nor a formula for the ground-state multiplicity of an arbitrary glued Hamiltonian. A microscopic application must establish a nonempty code space and the applicability of the topological count; the commuting but sign-frustrated odd-$L_x$ site gluing below fails the first condition. The equality for the local $u$-family, with its relation-phase conditions, is discussed separately in an unpublished companion manuscript~\cite{Wang2026paperE}; it is not an input to the results established here.

\emph{Open here.} We do not prove that arbitrary register realizations are related by a local unitary that also intertwines their symmetry actions, that $P$ is a complete invariant of a seam, or that it alone determines commutativity, frustration, distance or logical gates. Additional microscopic and symmetry data may distinguish realizations with the same label map. The literal noncommuting seam of Sec.~\ref{sec:seam} lies outside the admissible-lift hypothesis; its obstruction is established there by its Pauli algebra.
\end{remark}

\paragraph{Notation for the $\Z_N$ extension.}
For comparison, the $\Z_N$ extension is discussed in Junhao (Eigen) Wang's unpublished companion manuscript, \emph{Cross-caps do not supply magic within stabilizer operations: local seams and $\mathbb Z_N$ codes on lattice Klein bottles}~\cite{Wang2026paperE}. The present $\Z_2$ derivations and finite checks do not depend on that manuscript's conclusions. Here $N$ denotes the qudit dimension; in $\Had^{\otimes N}$ it counts physical qubits. In its fixed convention, $\mathcal A=\Z_N^2$, $u\in\Z_N^\times$ and $C=-I$. For upward seam transport of the ordered label column $(A,B)^T$, the single-site Weyl-monomial far-corner $u$-family has, for even $L_x$ and $L_x,L_y\ge3$, a diagonal branch ($a+L_y$ odd) and an exchange branch ($a+L_y$ even), with respective maps
\[
\begin{gathered}
 P_d=\begin{pmatrix}-u&0\\0&u^{-1}\end{pmatrix},\qquad
 P_e=\begin{pmatrix}0&-u\\u^{-1}&0\end{pmatrix},\\
 P_e^2=C.
\end{gathered}
\]
These are the conventions used for the local-gluing and seam-action results in that unpublished manuscript; the microscopic $u$-lift preserves the check constraints but need not preserve the chosen Hamiltonian for general $u$. At $N=2$ the allowed unit is $u=1$, so $P_d=I$, $P_e$ exchanges $e$ and $m$, and $C=I$. The edge-register map is $P=\mathrm{diag}(1,-1)$ in the same convention and is trivial at $N=2$. The two-colour matrices above assume even $L_x$; they are not assigned to the odd-$L_x$ site quotient. This is a comparison of already stated seam actions, not an extension of any numerical grid in this paper.

\begin{table}[!t]
\caption{Ground-state degeneracy of the consistent gluings (GF(2) rank, with every relation checked to have phase $+1$), the sign-frustrated dressed gluing at odd $L_x$, and the obstruction of the literal reflection gluing in the site register. ``all'' refers to the sizes listed in Appendix~\ref{app:methods}; $a$ is the axis offset, $h=L_x/2$ the glide shift. $^{*}$The plaquettes commute, but their product is $-1$: no state satisfies all of them, and $H=-\sum_pF_p$ has $E_0=-N_p+2$ with a $2N_p$-fold ground space (Sec.~\ref{sec:register}).}
\label{tab:gsd}
\centering\scriptsize\setlength{\tabcolsep}{1.5pt}
\begin{tabular}{@{}lllp{0.26\columnwidth}@{}}
\toprule
register & gluing & consistent & $\GSD$\\
\midrule
edge & torus & yes & $4$\\
edge & sheared torus & yes & $4$\\
edge & mirror, any $a$, any $L_y$ & yes & $4$\\
site & torus, $L_x,L_y$ even & yes & $4$\\
site & torus, $L_x$ or $L_y$ odd & yes & $2$\\
site & sheared torus & yes & $2$ iff $L_x/2{+}L_y$ odd\\
site & mirror dressed, $L_x$ even & yes & $2$ iff $a{+}L_y$ even\\
site & mirror dressed, $L_x$ odd & no$^{*}$ & --- (frustrated)\\
site & glide, dressed & yes & $2$ iff $h$ even\\
site & mirror, literal & no: $2L_x$ pairs & --- (Sec.~\ref{sec:seam})\\
site & glide, literal & no: $2L_x$ pairs & --- (Sec.~\ref{sec:seam})\\
\bottomrule
\end{tabular}
\end{table}

Table~\ref{tab:gsd} collects the $\mathrm{GF}(2)$ results. In the edge register every gluing we tried---the torus, the sheared torus, and the mirror Klein bottle with offsets $a=0,1$ at $L_y=4,5,6$ and $L_x=4,6,8$---closes into a stabilizer code of rank $n-2$, $\GSD=4$. For the Klein bottle this is $|H_1(K;\Z_2)|=4$: the reflection does not act on the anyon labels, and the code is the ordinary $\Z_2$ homology code of the surface. In the site register the torus has $\GSD=4$ for even$\times$even sizes and $2$ when a side is odd, as is known \cite{Wen2003plaquette}: the colouring of the plaquettes is inconsistent around an odd cycle, which exchanges $e$ and $m$. The sheared torus has $\GSD=2$ iff $L_x/2+L_y$ is odd, for the same reason: the plaquette of row $L_y-1$ at column $x$ and the plaquette of row $0$ that it touches across the wrap, at column $s(x)=x+L_x/2$, have colours $x+L_y-1$ and $x+L_x/2$, which coincide---i.e.\ the checkerboard is broken across the seam---exactly when $L_x/2+L_y$ is odd.

\paragraph{The dressed reflection is the symmetry.} Before gluing, we ask which lattice operations are symmetries of the site-register model on the $4\times4$ parent torus, by applying each candidate to a random state vector and comparing $UH$ with $HU$. The translations are symmetries (deviation $3\times10^{-16}$). The bare reflections $x\mapsto -x$, $x\mapsto1-x$, $y\mapsto-y$ and the bare glide $(x,y)\mapsto(x+2,-y)$ are not (relative deviation $1.38$); each of them composed with $\Had^{\otimes N}$ is a symmetry (deviation $\le6\times10^{-16}$), while $\Had^{\otimes N}$ composed with a translation is not ($1.44$). The reason is visible in Eq.~\eqref{eq:wen}: a reflection maps the corner pattern $X,Z,X,Z$ of a plaquette onto the pattern $Z,X,Z,X$ of its image, which the global Hadamard restores. On the four-dimensional ground space the translations act with eigenvalues $\{1,1,1,-1\}$ (trace $2$), the representation of $e\leftrightarrow m$ on the logical states; $\Had^{\otimes N}\circ(x\mapsto-x)$, $\Had^{\otimes N}\circ(y\mapsto-y)$ and the dressed glide act with the same eigenvalues, while $\Had^{\otimes N}\circ(x\mapsto1-x)$ acts as the identity (trace $4$). The dressed reflection therefore permutes anyons or not according to the parity of its axis offset, even on the torus: the reflection with $a$ odd maps the plaquette $(x,y)$ to the plaquette with corner $(a-x-1,y)$, of colour $a-1-x+y\equiv x+y\pmod2$, and preserves colours, whereas $a$ even exchanges them.

\paragraph{The colour-holonomy rule.} The dressed mirror plaquettes commute for every size and offset we tested ($L_x=4,\dots,12$; $L_y=3,\dots,6$; $a=0,1,2,3$; twenty-one sizes, Appendix~\ref{app:methods}), and the product of all of them is $(-1)^{L_x}$ times the identity. For even $L_x$ every relation has phase $+1$, the gluing closes into a code, and
\begin{equation}
\GSD=\begin{cases}2,& a+L_y\ \text{even},\\ 4,& a+L_y\ \text{odd},\end{cases}
\label{eq:rule}
\end{equation}
and the dressed glide quotient closes with $\GSD=2$ for $h$ even and $4$ for $h$ odd ($L_x=4,\dots,12$). For odd $L_x$ the single relation among the dressed plaquettes has phase $-1$ ($L_x=5,7,9,11$, all offsets): no state satisfies all of them, and $H=-\sum_pF_p$ is frustrated by a global sign, with one plaquette necessarily violated, $E_0=-N_p+2$, and a $2N_p$-fold degenerate ground space (the violated plaquette can be any of the $N_p$, and the remaining $N_p-1$ independent plaquettes leave a two-dimensional joint eigenspace on the $N_p$ qubits), as dense diagonalization confirms at $3\times3$, $3\times4$ and $5\times2$ and a Lanczos run at $5\times3$ ($E_0=-13$); reversing the sign of one plaquette term restores a consistent code with $\GSD=2$. This is the Klein-bottle counterpart of Wen's odd$\times$odd torus at the opposite sign of the coupling \cite{Wen2003plaquette}, where the same kind of constraint forces one vortex into the ground state (with our sign the odd$\times$odd torus itself is a consistent code, Table~\ref{tab:gsd}; on the Klein bottle with odd $L_x$ the standard sign is the frustrated one). (An earlier draft reported $\GSD=2$ for these sizes from the $\mathrm{GF}(2)$ rank alone; the sign was found by an independent $\Z_N$ implementation of the same models.) Both even-$L_x$ rules are the sheared-torus rule in disguise: the plaquette of row $L_y-1$ at column $x$ has colour $x+L_y-1$, the plaquette of row $0$ that it touches through the reflection has its SW corner at $a-x-1$ and colour $a-x-1$, and the two coincide---the checkerboard is broken, the non-orientable cycle acts as $e\leftrightarrow m$---exactly when $a+L_y$ is even; for the glide the plaquettes $(x,0)$ and $(x+h,0)$ are neighbours across the half row and have equal colours exactly when $h$ is even. The Hadamard dressing itself does not permute the anyons: it is what makes the reversed corner pattern consistent, and the permutation is carried by the lattice map. The values $4$ and $2$ are the numbers of anyons of the toric code left invariant by a trivial and by an $e\leftrightarrow m$ parity action, in agreement with the general count for topological order on non-orientable surfaces \cite{ChanTeoRyu2016}; the seam of the dressed gluing with $a+L_y$ even is a closed duality twist line.

\paragraph{Application and limits of the transport statement.} Proposition~\ref{prop:register} organizes the comparison on this lattice Klein bottle. In the edge register the seam acts trivially on the $\Z_2$ anyon labels, with $\GSD=4$ as stated above. In the site register the reflection requires the Hadamard dressing to preserve the plaquette checks, but the dressing alone does not force $e\leftrightarrow m$: for even $L_x$ the colour holonomy $a+L_y$ determines whether $P=P_e$ or $P=P_d$, with the reported code degeneracy $2$ or $4$ in Eq.~\eqref{eq:rule}. Odd $L_x$ remains the sign-frustrated exception: the dressed plaquettes commute but admit no common $+1$ eigenstate. Thus the label-transport proposition and the phase-sensitive microscopic consistency test answer different questions. The literal Pauli gluing in the site register, which tries to identify the reversed pattern without the dressing, is inconsistent for every size and offset ($L_x=4,\dots,12$, even and odd; $L_y=3,\dots,6$; $a=0,\dots,3$): exactly $2L_x$ pairs of plaquette operators anticommute, and we now turn to the Hamiltonian this defines.

\section{The obstructed gluing: locked Majorana rings}\label{sec:seam}

\begin{figure*}[!t]
\centering\includegraphics[width=0.98\textwidth]{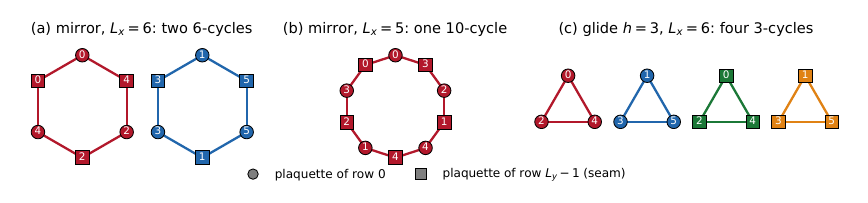}
\caption{Anticommutation graphs of the literal gluing (vertices: plaquette operators, edges: anticommuting pairs; numbers: column). (a) Mirror Klein bottle, $L_x=6$: the plaquettes of row $L_y-1$ (squares) and of row $0$ (circles) form two $6$-cycles, one per column parity. (b) Mirror Klein bottle, $L_x=5$: the two cycles merge into one $10$-cycle through all columns. (c) Glide quotient with $h=3$: four $3$-cycles, two in each row.}
\label{fig:graph}
\end{figure*}

\paragraph{Anatomy of the obstruction.} The Hamiltonian $H=-\sum_pF_p$ of the literal gluing is a sum of $N_p=L_xL_y$ four-body Pauli terms that no longer commute. The anticommutation graph depends only on the parity of $L_x$ (Fig.~\ref{fig:graph}): the $L_x$ plaquettes of row $L_y-1$ (the seam plaquettes) and the $L_x$ plaquettes of row $0$ each anticommute with exactly two operators, and the $2L_x$ involved operators form cycles that alternate between the two rows. The seam plaquette at column $x$ shares one site with the row-$0$ plaquettes whose corners it meets with the wrong Pauli, at columns $a-x-2$ and $a-x$, and two sites with the plaquette at $a-x-1$, with which it commutes; this is why each operator has degree two, and following the cycle from the seam plaquette at $x$ through the row-$0$ plaquette at $a-x$ leads to the seam plaquette at $x-2$. For even $L_x$ the step of two columns preserves the column parity and there are two disjoint cycles of length $L_x$, one through the even and one through the odd columns (Fig.~\ref{fig:graph}a); for odd $L_x$ the step generates all of $\Z_{L_x}$ and the two cycles merge into a single cycle of length $2L_x$ (Fig.~\ref{fig:graph}b). The $\mathrm{GF}(2)$ rank of the adjacency matrix is $2L_x-4$ for even $L_x$ (each even cycle has corank $2$) and $2L_x-2$ for odd $L_x$ (one even cycle), and all other $N_p-2L_x$ plaquettes commute with everything: we call them the bulk stabilizers. For the glide quotient with $h$ even the graph is again two $L_x$-cycles; for $h$ odd it is four cycles of length $h$, two within each of the rows $0$ and $L_y-1$ (Fig.~\ref{fig:graph}c), with the same rank $2L_x-4$ (each odd cycle has corank $1$).

\begin{lemma}[cycle algebra]\label{lem:cycle}
Let $O_1,\dots,O_L$ be Hermitian involutions such that $O_j$ anticommutes with $O_{j\pm1}$ (indices mod $L$) and commutes with all others. For $L$ even, the products $E=\prod_{j\,\mathrm{even}}O_j$ and $D=\prod_{j\,\mathrm{odd}}O_j$ are central, $ED=\pm\Pi$ with $\Pi=O_1O_2\cdots O_L$, and in every irreducible representation $O_j=i\gamma_j\gamma_{j+1}$ for Majorana operators $\gamma_1,\dots,\gamma_L$ with $\gamma_{L+1}=\pm\gamma_1$; since $\Pi=i^L\gamma_1\gamma_{L+1}$, the two eigenvalues $\pm i^L$ of $\Pi$ label the periodic and the antiperiodic ring. Consequently $H_{\rm ring}=-\sum_jO_j$ is a free Majorana ring with periodic or antiperiodic boundary condition, with single-particle energies $2\sin k$ at $k=2\pi m/L$ or $(2m+1)\pi/L$, and
\begin{equation}
E_0^{\rm AP}(L)=-\frac{2}{\sin(\pi/L)},\qquad E_0^{\rm P}(L)=-\frac{2}{\tan(\pi/L)},
\label{eq:ring}
\end{equation}
the antiperiodic ground state being unique and the periodic one two-fold degenerate (two zero modes). For $L$ odd the product $\Pi$ is central and anti-Hermitian, both of its sectors have the same spectrum, and $E_0=-\cot(\pi/2L)$ ($=-\sqrt3$ for $L=3$).
\end{lemma}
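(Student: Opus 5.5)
We would prove the lemma by building an explicit Majorana representation of the abstract algebra generated by $O_1,\dots,O_L$, and then diagonalising the resulting quadratic Hamiltonian by Fourier transform.

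\emph{Centrality and the ring sector.} First we check that $E$ and $D$ are central. For $L$ even, each $O_k$ anticommutes with exactly two factors of the same-parity product: $O_k$ with $k$ odd meets $O_{k\pm1}$ in $E$, and $O_k$ with $k$ even meets $O_{k\pm1}$ in $D$. It commutes with every factor of the other product. Two sign flips cancel, so $E$ and $D$ commute with every $O_k$. Reordering $\Pi=O_1\cdots O_L$ into $E D$ produces only a global sign, from the anticommuting adjacent pairs that are moved past each other, so $ED=\pm\Pi$ and $\Pi$ is central as well. Since $E$ and $D$ are Hermitian central involutions, in an irreducible representation they are scalars $\pm1$, and hence so is $\Pi$. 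For $L$ odd, the same count shows that $\Pi$ is central. Moving $O_L$ around the cycle gives $\Pi^\dagger=O_L\cdots O_1=-\Pi$, and $\Pi^2=-1$, so $\Pi=\pm i$ in an irrep.

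\emph{Majorana representation.} Next we introduce Majoranas in a fixed irrep. We adjoin one auxiliary Majorana $\gamma_1$: tensor with a two-dimensional Clifford factor and restrict to a fixed total parity, so that nothing changes on the original operators. We then define recursively $\gamma_{j+1}=-i\gamma_jO_j$ for $j=1,\dots,L$. The key steps are the following.
\begin{itemize}
\item Each $\gamma_{j+1}$ is Hermitian and squares to $1$, because $O_j$ anticommutes with $\gamma_j$. This holds inductively: $\gamma_j$ contains $O_{j-1}$, which anticommutes with $O_j$.
\item The $\gamma_j$ mutually anticommute for $j\le L$; this is the usual Jordan--Wigner bookkeeping.
\item $O_j=i\gamma_j\gamma_{j+1}$ holds by construction.
\end{itemize}
The closing step $j=L$ produces $\gamma_{L+1}=(-i)^L\gamma_1 O_1\cdots O_L=(-i)^L\gamma_1\Pi$. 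For $L$ even, $\Pi$ is a scalar $\pm1$ times the identity, so $\gamma_{L+1}=\pm\gamma_1$. The identity $\Pi=i^L\gamma_1\gamma_{L+1}$ then follows by multiplying the bilinears, and the two eigenvalues $\pm i^L$ of $\Pi$ correspond to the periodic and antiperiodic boundary conditions.

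\emph{Spectrum for even $L$.} With $\gamma_{L+1}=\pm\gamma_1$, the Hamiltonian $H_{\rm ring}=-\sum_j i\gamma_j\gamma_{j+1}$ is translation invariant with periodic or antiperiodic twist. We Fourier transform, $\gamma_j\propto\sum_k e^{ikj}c_k$, with $k=2\pi m/L$ or $(2m+1)\pi/L$. This gives single-particle energies $2\sin k$, paired as $k\leftrightarrow -k$. The ground energy is minus the sum of $2\sin k$ over $k\in(0,\pi)$. Summing the geometric series of $e^{ik}$ gives $1/\sin(\pi/L)$ in the antiperiodic case and $\cot(\pi/L)$ in the periodic case, which is Eq.~\eqref{eq:ring}. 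In the periodic case the momenta $k=0$ and $k=\pi$ give zero-energy modes; together they form one complex fermion, which yields the twofold degeneracy. In the antiperiodic case every mode is gapped and the ground state is unique. We would also check that the auxiliary $\gamma_1$ only doubles the space trivially and does not fix fermion parity beyond what $\Pi$ already fixes.

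\emph{Odd $L$ (main obstacle).} Here $\Pi=\pm i$, so $\gamma_{L+1}=(-i)^L\gamma_1\Pi$ is no longer $\pm\gamma_1$ but $\pm i\gamma_1$ times a sign, and the single ring does not close. What we would try first is to unfold the cycle into a chain of $2L$ Majoranas, $\gamma_{L+j}:=\gamma_{L+1}\cdots$ continued, with the doubled ring antiperiodic. This realizes $H$ as half of a $2L$-site ring, whose relevant sum is $-\sum\sin((2m+1)\pi/2L)$ over the odd half. That evaluates to $-\cot(\pi/2L)$, which equals $-\sqrt3$ at $L=3$. The equality of the spectra in the two sectors $\Pi=\pm i$ follows from complex conjugation in a real basis: the $O_j$ are real there, so conjugation fixes $H$ and maps $\Pi\mapsto\bar\Pi=-\Pi$. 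Making the doubling precise, and checking that the sign conventions give exactly $\cot(\pi/2L)$ rather than a neighbouring expression, is the step we expect to require the most care.
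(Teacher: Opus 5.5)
Your even-$L$ argument is sound, and it is the route the paper only sketches: the paper calls the Majorana representation standard, quotes the Fourier sums of Appendix~\ref{app:cycles}, and confirms them by Jordan--Wigner diagonalization. Your recursion $\gamma_{j+1}=-i\gamma_jO_j$ is more explicit than anything in the paper.

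\textbf{The odd-$L$ case has a genuine gap, caused by an arithmetic slip.} For $L$ odd, $(-i)^L=\pm i$ and $\Pi=\pm i$, so $(-i)^L\Pi=\pm1$ and $\gamma_{L+1}=\pm\gamma_1$. The ring of $L$ Majoranas does close. Your own first bullet already forces this: it makes $\gamma_{L+1}$ Hermitian, and $\pm i\gamma_1$ is not.

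No auxiliary Majorana is needed here. The element $\gamma_1=O_2O_4\cdots O_{L-1}$ anticommutes with $O_1$ and $O_L$ and commutes with the rest. For $L=3$ this gives $\gamma_1=O_2$, $\gamma_2=-iO_2O_1$, $\gamma_3=O_1$; taking $O_1=Z$, $O_2=X$ forces $O_3=\pm Y$, and $E_0=-\sqrt3$.

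Your replacement, half of an antiperiodic $2L$-ring, is the wrong object and cannot produce $\cot(\pi/2L)$. Its momenta are odd multiples of $\pi/2L$, whereas every momentum of the closed $L$-ring is a multiple of $\pi/L$. At $L=3$ it gives $\tfrac12E_0^{\rm AP}(6)=-2\neq-\sqrt3$.

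The correct finish is your even-$L$ Fourier argument, unchanged:
\begin{itemize}
\item The momenta are $k=2\pi m/L$ (P) or $(2m+1)\pi/L$ (AP).
\item Each set has one zero mode ($k=0$ or $k=\pi$), which causes no degeneracy in the $2^{(L-1)/2}$-dimensional irrep.
\item The map $k\mapsto\pi-k$ exchanges the two sets on $(0,\pi)$, which gives equality of the two sectors. Your conjugation argument also works, since the complex conjugate of the $\Pi=i$ irrep is the $\Pi=-i$ irrep.
\item The ground energy is
\[
E_0=-2\sum_{m=1}^{(L-1)/2}\sin\frac{2\pi m}{L}=-\frac{2\cos^2(\pi/2L)}{\sin(\pi/L)}=-\cot\frac{\pi}{2L}.
\]
\end{itemize}
If you want a doubling picture, the P and AP momenta together are those of the \emph{periodic} $2L$-ring, so $E_0=\tfrac12E_0^{\rm P}(2L)$. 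The antiperiodic $2L$-ring is the odd-$L_x$ seam of Sec.~\ref{sec:seam}, which is an even cycle and is covered by the even case.

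\textbf{A smaller point for even $L$.} Your planned check that the auxiliary $\gamma_1$ ``only doubles the space trivially'' would fail. $\gamma_1$ anticommutes with $O_1$ (a factor of $D$) and with $O_L$ (a factor of $E$). It therefore maps the $(E,D)$ irrep onto the inequivalent $(-E,-D)$ irrep. The enlarged space is the sum of the two fermion-parity sectors at fixed $\Pi$, and for the same reason you cannot ``restrict to a fixed total parity'' while keeping $\gamma_1$.

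The statements ``AP ground state unique, P ground state two-fold'' hold on this $2^{L/2}$-dimensional module, which is what Appendix~\ref{app:cycles} diagonalizes. Inside a single irrep of the $O_j$, the periodic ground state is unique. $E_0^{\rm AP}$ occurs only in the parity sector of the Bogoliubov vacuum; the other sector starts at $E_0^{\rm AP}+4\sin(\pi/L)$. This is the distinction Sec.~\ref{sec:seam} uses for the single-mode level.

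Also, in your centrality count, $O_k$ meets two factors of the \emph{opposite}-parity product, not the same-parity one.
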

The representation statement is the standard one for a Clifford-type algebra with the cycle as its anticommutation graph; the spectra follow from the single-particle problem and are confirmed by direct diagonalization of the algebra in a Jordan--Wigner representation for $L=4,\dots,10,14$ and $L=3,5,7$ (Appendix~\ref{app:cycles}). For the seam this means: for even $L_x$ each of the two cycles is a Majorana ring of $L_x$ sites, for odd $L_x$ the seam is a single Majorana ring of $2L_x$ sites, and the ground energy of a ring is $E_0^{\rm AP}$ or $E_0^{\rm P}$ according to the value of its product $\Pi_r$.

\paragraph{Sector locking.} Which sector occurs is not a choice. The ring products $\Pi_r$ are central and commute with $H$; their product is, up to a sign fixed by the anticommutation relations, the product of all seam-adjacent plaquettes, which equals the product of all plaquettes divided by the product of the bulk stabilizers. The product of all $N_p$ plaquettes is the identity up to a sign (every site carries $X$ twice and $Z$ twice), so $\prod_r\Pi_r$ is a fixed sign times the product of the bulk stabilizers, which is $+1$ in every state that satisfies them. For even $L_x$ this ties the sectors of the two rings together; for odd $L_x$ the product of the single ring is itself fixed, and the ring cannot change sector without a violated bulk plaquette. The exact Pauli trace of the sector projectors (see \emph{Degeneracy} below) makes this quantitative at every size we computed: in the space where all bulk stabilizers are $+1$, the $(\mathrm{AP},\mathrm{AP})$ and $(\mathrm{P},\mathrm{P})$ sectors of an even-$L_x$ seam have equal dimension and the mixed sectors have dimension zero, while for odd $L_x$ the antiperiodic sector has the full dimension and the periodic sector has dimension zero. On the $4\times3$ lattice we also evaluate the central elements on the exact eigenstates: $\Pi_1=\Pi_2=-1$ (the antiperiodic value for $L_x=4$) on all ground states, $+1$ on all $32$ states of the first excited multiplet, and the four alternating products $E_r,D_r$ are $+1$ on the ground space (they take both signs on the excited multiplet). For even $L_x$ the ground state is the $(\mathrm{AP},\mathrm{AP})$ sector, the first excited multiplet is $(\mathrm{P},\mathrm{P})$, and the mixed sectors require a violated bulk stabilizer; for odd $L_x$ the ground state is the antiperiodic sector of the $2L_x$-ring and the periodic sector is reached only together with a violated bulk stabilizer.

\paragraph{Exact ground energy, gap and levels.} With $N_p-2L_x$ bulk stabilizers at $-1$ and the seam rings antiperiodic,
\begin{equation}
E_0=-N_p+2L_x-\begin{cases}4/\sin(\pi/L_x), & L_x\ \text{even},\\ 2/\sin(\pi/2L_x), & L_x\ \text{odd},\end{cases}
\label{eq:E0}
\end{equation}
i.e.\ $2E_0^{\rm AP}(L_x)$ or $E_0^{\rm AP}(2L_x)$ added to the bulk energy. For even $L_x\ge4$ the lowest excitations are, in order of appearance: the $(\mathrm{P},\mathrm{P})$ sector at
\begin{equation}
\Delta=2\big[E_0^{\rm P}(L_x)-E_0^{\rm AP}(L_x)\big]=4\tan\frac{\pi}{2L_x} ,
\label{eq:gap}
\end{equation}
one single-particle excitation of an antiperiodic ring at $4\sin(\pi/L_x)$---present only when $a+L_y$ is odd, i.e.\ when the plaquettes obey a single $\mathrm{GF}(2)$ relation: for $a+L_y$ even the second relation fixes the alternating products $E_r,D_r$, which are the fermion parities of the two rings, and a single mode cannot be excited without a bulk flip---and a mixed sector made possible by one violated bulk plaquette at $2+2\tan(\pi/2L_x)$. For odd $L_x$ the periodic sector is not available without a bulk flip, and the lowest excitation is a single mode of the $2L_x$-ring,
\begin{equation}
\Delta=4\sin\frac{\pi}{2L_x} ,
\label{eq:gapodd}
\end{equation}
followed by the periodic sector with one violated bulk plaquette at $2+2\tan(\pi/4L_x)$ and by two ring modes at $8\sin(\pi/2L_x)$. Table~\ref{tab:seam} and Fig.~\ref{fig:spectra}(b) compare these with exact diagonalization on $4\times3$, $4\times4$, $6\times3$ and $5\times3$ for both offsets and on $5\times4$ and $7\times3$ ($21$ qubits). The ground energies agree with Eq.~\eqref{eq:E0} to $2\times10^{-13}$ ($E_0=-9.656854$, $-13.656854$, $-14$; $-7-2\sqrt5=-11.472136$ at $5\times3$, $-16.472136$ at $5\times4$, $-15.987918$ at $7\times3$), the gap is $4(\sqrt2-1)=1.656854$ for $L_x=4$, $4(2-\sqrt3)=1.071797$ for $L_x=6$, $\sqrt5-1=1.236068$ for $L_x=5$ and $0.890084$ for $L_x=7$, and the next levels are $2\sqrt2$ ($L_x=4$; here $4\sin(\pi/4)$ and $2+2\tan(\pi/8)$ coincide), $2$ and $2.5359$ for $6\times3$ at $a=0$ but $2.5359$ alone at $a=1$, where $a+L_y$ is even and the single-mode level is absent, $2.316769$ and $2\sqrt5-2=2.472136$ for $L_x=5$ (the periodic sector with a bulk flip, then two ring modes), and $1.780167$ and $2.225346$ for $L_x=7$ (the same two levels in the opposite order), all as predicted. The two $24$-qubit predictions of Fig.~\ref{fig:spectra}(c), $E_0=-24+12-8=-20$ for the $6\times4$ mirror gluing and $E_0=-24+12-4\sqrt3=-18.928203$ for the $6\times4$ glide quotient with $h=3$ (four $3$-cycles at $-\sqrt3$ each), are reproduced by a Lanczos iteration to $5\times10^{-14}$.

\begin{table}[!t]
\caption{The obstructed gluing: exact diagonalization against the closed forms (in every entry with an exact-diagonalization energy, ED and formula agree to $2\times10^{-13}$). $E_0$ from Eq.~\eqref{eq:E0} (mirror, and glide with $h$ even) or $-N_p+2L_x-4\cot(\pi/2h)$ (glide, $h$ odd); $\Delta$ from Eq.~\eqref{eq:gap}, or from Eq.~\eqref{eq:gapodd} for odd $L_x$ ($^{*}$); $\GSD$ from the exact Pauli trace of the ground projector (Sec.~\ref{sec:seam}), with the count found by exact diagonalization in parentheses when the Krylov space was large enough to hold the whole multiplet.}
\label{tab:seam}
\centering\scriptsize\setlength{\tabcolsep}{1.5pt}
\begin{tabular}{llrrr}
\toprule
gluing & size & $E_0$ & $\Delta$ & $\GSD$\\
\midrule
mirror $a{=}0$ & $4\times3$ & $-9.656854$ & $1.656854$ & $8$ (8)\\
mirror $a{=}1$ & $4\times3$ & $-9.656854$ & $1.656854$ & $16$ (16)\\
mirror $a{=}0$ & $4\times4$ & $-13.656854$ & $1.656854$ & $16$ (16)\\
mirror $a{=}1$ & $4\times4$ & $-13.656854$ & $1.656854$ & $8$ (8)\\
mirror $a{=}0$ & $6\times3$ & $-14$ & $1.071797$ & $32$ (32)\\
mirror $a{=}1$ & $6\times3$ & $-14$ & $1.071797$ & $64$\\
mirror $a{=}0$ & $6\times4$ & $-20$ (Lanczos) & --- & $64$\\
mirror $a{=}1$ & $6\times4$ & --- & --- & $32$\\
mirror $a{=}0/1$ & $8\times3$ & --- & --- & $128/256$\\
mirror $a{=}0/1$ & $8\times4$ & --- & --- & $256/128$\\
mirror $a{=}0$ & $5\times3$ & $-11.472136$ & $1.236068^{*}$ & $32$ (32)\\
mirror $a{=}1$ & $5\times3$ & $-11.472136$ & $1.236068^{*}$ & $32$ (32)\\
mirror $a{=}0$ & $5\times4$ & $-16.472136$ & $1.236068^{*}$ & $32$\\
mirror $a{=}0$ & $7\times3$ & $-15.987918$ & $0.890084^{*}$ & $128$\\
mirror $a{=}0/1$ & $7\times4$ & --- & --- & $128/128$\\
mirror $a{=}0/1$ & $9\times3$ & --- & --- & $512/512$\\
glide $h{=}2$ & $4\times3$ & $-9.656854$ & $1.656854$ & $16$ (16)\\
glide $h{=}3$ & $6\times3$ & $-12.928203$ & $2$ & $256$\\
glide $h{=}3$ & $6\times4$ & $-18.928203$ (Lanczos) & --- & $256$\\
glide $h{=}5$ & $10\times4$ & --- & --- & $4096$\\
\bottomrule
\end{tabular}
\end{table}

\begin{figure*}[!t]
\centering\includegraphics[width=0.98\textwidth]{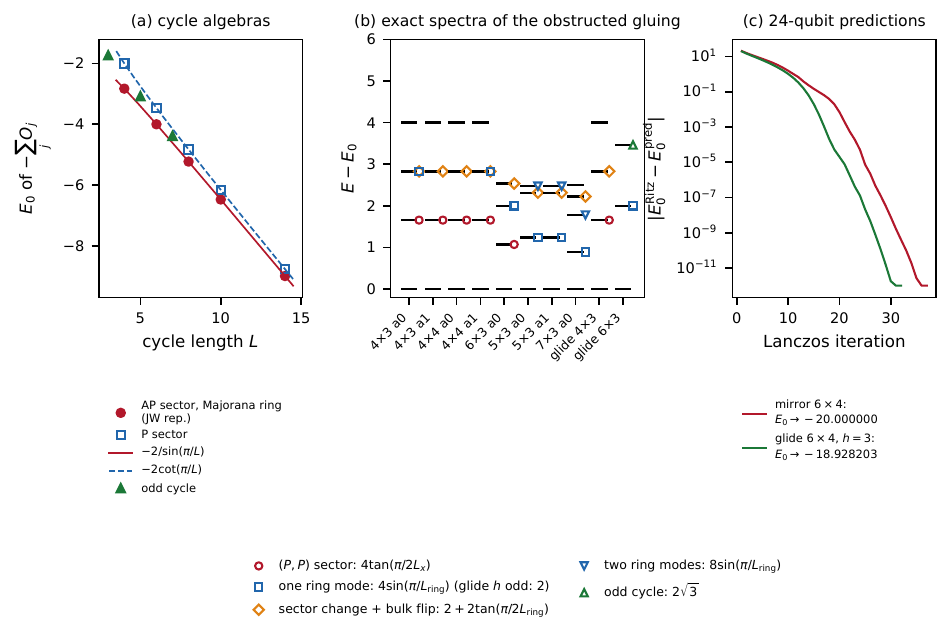}
\caption{(a) Ground energy of the cycle Hamiltonian $-\sum_jO_j$ in the two sectors of the central product, from the Jordan--Wigner representation (symbols) against Eq.~\eqref{eq:ring} (lines); triangles: odd cycles, $-\cot(\pi/2L)$. (b) Low levels of the frustrated Hamiltonian relative to $E_0$ (black) for the mirror and glide gluings, with the decoded levels in terms of the ring length $L_{\rm ring}=L_x$ (even $L_x$) or $2L_x$ (odd $L_x$): $(\mathrm P,\mathrm P)$ sector $4\tan(\pi/2L_x)$ (even $L_x$ only), one ring mode $4\sin(\pi/L_{\rm ring})$ (shown only where the ring parities are free, i.e.\ one $\mathrm{GF}(2)$ relation), a sector change with one bulk flip $2+2\tan(\pi/2L_{\rm ring})$, two ring modes $8\sin(\pi/L_{\rm ring})$; for the odd-cycle glide, one bulk flip ($2$) and one cycle excitation ($2\sqrt3$). (c) Convergence of the $24$-qubit Lanczos energies to the two predictions.}
\label{fig:spectra}
\end{figure*}

\paragraph{Degeneracy.} The ground-state degeneracy follows from the same structure. The ground space is the common eigenspace in which every bulk stabilizer is $+1$, every central element of the seam algebra takes its ground value, and each ring is in the unique ground state of its irreducible representation. The seam algebra has $\mathrm{GF}(2)$ symplectic rank $\mathrm{rank}\,\mathsf A$, so its irreducible representations have dimension $2^{\frac12\mathrm{rank}\,\mathsf A}$; counting dimensions gives
\begin{equation}
\begin{aligned}
\GSD&=2^{\,n-\mathrm{rank}\,S+\frac12\mathrm{rank}\,\mathsf A}\\
&=2^{\,n-\mathrm{rank}\,S+L_x-2+[L_x\ \mathrm{odd}]},
\end{aligned}
\label{eq:gsd}
\end{equation}
where $S$ is the full set of $N_p$ plaquette rows and $n-\mathrm{rank}\,S$ is the number of $\mathrm{GF}(2)$ relations among them: for even $L_x$ it is $2$ exactly when $a+L_y$ is even (the same parity as in Eq.~\eqref{eq:rule}) and $1$ otherwise; for odd $L_x$ it is $1$ for every offset. Explicitly, $\GSD=2^{L_x-1}$ or $2^{L_x}$ for even $L_x$ and $2^{L_x}$ for odd $L_x$. For the glide quotient with $h$ odd the two sectors of each odd cycle are degenerate (Lemma~\ref{lem:cycle}), and only the product of the four cycle products is fixed by the bulk, so Eq.~\eqref{eq:gsd} acquires a factor $2^{3}$: $\GSD=2^{L_x+2}$. We verify these counts exactly, without diagonalization, by writing the ground projector as $\prod_{\rm bulk}\tfrac12(1+F_p)\prod_{\rm cycles}P_c$ and taking its trace in the Pauli basis ($\Tr$ of a Pauli string is $2^n$ for the identity and $0$ otherwise). The ground projector $P_c$ of a cycle in a given sector is expanded exactly in the $2^{L_c-1}$ ordered products $\prod_{j\in T}O_j$, $T\subseteq\{1,\dots,L_c-1\}$, which are orthogonal Pauli strings spanning the cycle algebra of that sector, with coefficients read off in the Jordan--Wigner representation, multiplied by the central idempotent $\tfrac12(1+\Pi_c/\pi_c)$ that selects the sector; the same trace gives the dimension of every sector combination in the bulk-stabilized space (the locking statements above) and $\Tr(HP)/\Tr P$, the exact ground energy. The result is $8,16,16,8,32,64,64,32,128,256,256,128$ for the mirror gluing at $4\times3$, $4\times4$, $6\times3$, $6\times4$, $8\times3$, $8\times4$ (offsets $0/1$), $32,32,128,128,512$ at $5\times3$, $5\times4$, $7\times3$, $7\times4$, $9\times3$ (both offsets), $16$ for the glide at $4\times3$ and $4\times4$, $256$ for $6\times3$ and $6\times4$ with $h=3$ and for $8\times4$ with $h=4$, and $4096$ for $10\times4$ with $h=5$, all integers to $10^{-11}$ and all equal to the formula. Exact diagonalization finds the full multiplet where the Krylov space allows it ($8,16,16,8$ at $L_x=4$; $32$ at $6\times3$, $a=0$, with $100$ Lanczos vectors; $32$ at $5\times3$ for both offsets) and a lower bound otherwise.

\paragraph{What the degeneracy is not.} The factor $2^{\frac12\mathrm{rank}\,\mathsf A}$ is a property of the frustrated Hamiltonian, not a topological degeneracy. On $4\times3$ we add single-site fields $h\sum X$, $h\sum Y$ or $h\sum Z$, uniform or random, to the two seam rows or to the bulk row: fields on the bulk row leave the eight-fold ground multiplet degenerate to $2\times10^{-13}$, while fields on the seam rows split it at second order---for a random $Z$ field the splitting is $9.3\times10^{-5},3.7\times10^{-4},1.5\times10^{-3},5.9\times10^{-3}$ at $h=0.01,0.02,0.04,0.08$ (log--log slope $2.0$), with the same slope for random $X$ and $Y$ fields---and no first-order term. The seam rings are exact and their spectrum is exact, but their degeneracy is lifted at order $h^2$ by every single-site seam field we tested; we make no protection claim, and we have not examined multi-site perturbations.

\section{Klein-bottle codes: logical operators and lattice-symmetry gates}\label{sec:codes}

Each consistent gluing of Secs.~\ref{sec:model}--\ref{sec:register} is a stabilizer code, and the lattice Klein bottle has a symmetry that the torus does not have: its half-period translation along the seam is not isotopic to the identity. In this section we compute exactly, for even $L_x$ (so that the half translation $h=L_x/2$ exists; the sizes are listed in Table~\ref{tab:codes}), the logical operators and distances of the three codes and the logical gate implemented by each enumerated deck-compatible affine isometry, with or without the Hadamard dressing. The distance formulas below are those observed at the sizes computed, not theorems. The result is register dependent once more. On the untwisted Klein bottle of the site register the isometries, acting as qubit relabellings and single-row Hadamards, implement the logical $\mathrm{SWAP}$ and the logical $\Had\otimes\Had$ of the two logical qubits---the group generated by the mapping-class action on $H_1(K;\Z_2)$ and by $e\leftrightarrow m$; on the edge register they implement only the $\mathrm{SWAP}$; and on the twisted code each audited isometry acts as a logical Pauli.

\paragraph{Homology and the cross-cap basis.} $H_1(K;\Z_2)=\Z_2^2$ is generated by the two-sided cycle $\gamma_x$ (a row) and the one-sided cycle $\gamma_y$ (a column through the axis, closed by the reflection), with intersection form $\gamma_x\cdot\gamma_x=0$, $\gamma_x\cdot\gamma_y=1$, $\gamma_y\cdot\gamma_y=1$: a one-sided curve meets its own push-off once. The classes $c_1=\gamma_y$ and $c_2=\gamma_x+\gamma_y$, represented by the columns through $x=a/2$ and $x=a/2+h$ (for $a$ even; for $a$ odd the columns next to the axis, with one horizontal step), are both one-sided and satisfy $c_i\cdot c_j=\delta_{ij}$: this is $K=\mathbb{RP}^2\#\mathbb{RP}^2$ with one cross-cap per $c_i$, and the half translation $x\mapsto x+h$ exchanges the two. The natural logical basis of the $\Z_2$ toric code on $K$ is the \emph{cross-cap basis}: $\bar Z_i$ the $e$ loop and $\bar X_i$ the $m$ loop along $c_i$. The two operators of one cross-cap anticommute because they cross once, those of different cross-caps commute, and the loops along $\gamma_x$ are products, $e(\gamma_x)=\bar Z_1\bar Z_2$. In this basis the two topological operations available on $K$ take their simplest form: the Dehn twist about $\gamma_x$, which acts on homology as $\gamma_y\mapsto\gamma_x+\gamma_y$ and generates the mapping-class action on $H_1(K;\Z_2)$ \cite{Lickorish1963}, exchanges $c_1\leftrightarrow c_2$ and is the logical $\mathrm{SWAP}$; the $e\leftrightarrow m$ exchange maps $\bar Z_i\leftrightarrow\bar X_i$ and is the logical $\Had\otimes\Had$, the observation behind the cross-cap Hadamard of Ref.~\cite{Kobayashi2024crosscap}. On the torus, by contrast, the $e$ and $m$ loops along the same cycle commute: pairing $e(\gamma_x)$ with $m(\gamma_y)$ as one logical qubit and $m(\gamma_x)$ with $e(\gamma_y)$ as the other, $e\leftrightarrow m$ is the $\mathrm{SWAP}$ of the two qubits, and with $\bar Z_i$ the $e$ loops on both qubits, the convention we keep below, it is $\mathrm{SWAP}\circ(\Had\otimes\Had)$; either way it is the operation with trace $2$ found in Sec.~\ref{sec:register}.

\paragraph{Codes and distances.} The logical operators are computed as a symplectic basis of $N(S)/S$ over $\mathrm{GF}(2)$ and the distance as the minimum weight over all non-trivial logical classes, by enumerating the whole stabilizer group (site register, $n\le24$), the $X$- and $Z$-type parts separately (edge register, a CSS code, $n\le48$), or all Pauli strings of weight $\le5$ (site register at $6\times5$). Table~\ref{tab:codes} collects the results and Fig.~\ref{fig:codes} shows minimum-weight representatives. The edge register gives $[[2L_xL_y,2,\min(L_x,L_y)]]$ with a type asymmetry, $d_Z=\min(L_x,L_y+[a\ \text{odd}])$ and $d_X=\min(L_x,L_y+[a\ \text{even}])$: the dual lattice of a Klein bottle with a site-centred axis has a bond-centred one, so on each cross-cap exactly one of the two loop types needs the extra step. The untwisted site code is $[[L_xL_y,2,\min(L_x,L_y)]]$: its strings run along diagonals, and both loop types close through the seam in $L_y$ sites for either axis parity. The twisted site code is $[[L_xL_y,1,\min(L_x,2\lceil L_y/2\rceil)]]$ on all sizes computed: its logical $\bar X$ is the $e$ loop along $\gamma_x$ (equivalent, modulo stabilizers, to the $m$ loop, since the seam exchanges them) and $\bar Z$ is the $\psi=e\times m$ loop along $\gamma_y$, a column of $Y$'s closed through the axis (weight $L_y$ for $L_y$ even, $L_y+1$ for $L_y$ odd; the twist requires $a\equiv L_y\bmod2$). The distance is $4$ at the seven sizes with $\min(L_x,2\lceil L_y/2\rceil)=4$ and $6$ at $6\times5$, where an exhaustive search finds no logical operator of weight $\le5$ and the row loop $ZXZXZX$ has weight $6$.

\begin{table*}[!t]
\caption{Klein-bottle codes (even $L_x$). $d$: exact distance at each size $(L_x,L_y)$ computed (edge register: $d_X/d_Z$ for $a=0$, and $d_Z/d_X$ for $a=1$); the closed forms for $d$ summarize these sizes only. ``gates'': the logical action of the $8L_y$ enumerated affine-isometry candidates in the cross-cap basis ($\bar Z_i$ the $e$ loops), with $t$ the twist parity and $s$ the colour parity of the isometry defined in the text, modulo logical Paulis.}
\label{tab:codes}
\centering\scriptsize\setlength{\tabcolsep}{4pt}
\begin{tabular}{llp{6.1cm}l}
\toprule
code & $[[n,k,d]]$ & $d$ at the sizes computed & gates\\
\midrule
edge, mirror, $a=0,1$ & $[[2L_xL_y,\,2,\,\min(L_x,L_y)]]$ & $(4,3){:}\,3$, $(4,4){:}\,4$, $(6,3){:}\,3$, $(6,4){:}\,4$, $(8,3){:}\,3$; $\ d_X/d_Z=4/3$ for $L_y=3$, $4/4$ at $(4,4)$, $5/4$ at $(6,4)$ & $\mathrm{SWAP}^{t}$\\
site, dressed, $a+L_y$ odd & $[[L_xL_y,\,2,\,\min(L_x,L_y)]]$ & $(4,3){:}\,3$, $(4,4){:}\,4$, $(4,5){:}\,4$, $(4,6){:}\,4$, $(6,3){:}\,3$, $(6,4){:}\,4$, $(8,3){:}\,3$, $(6,5){:}\,5$ & $\mathrm{SWAP}^{t}(\Had\otimes\Had)^{s}$\\
site, dressed, $a+L_y$ even & $[[L_xL_y,\,1,\,\min(L_x,2\lceil L_y/2\rceil)]]$ & $4$ at the first seven sizes above, $6$ at $(6,5)$ & $1$ or $\bar P$\\
edge, torus & $[[2L_xL_y,\,2,\,\min(L_x,L_y)]]$ & $(4,4){:}\,4$, $(6,4){:}\,4$ & $1$\\
site, torus & $[[L_xL_y,\,2,\,\min(L_x,L_y)]]$ & $(4,4){:}\,4$, $(6,4){:}\,4$ & $[\mathrm{SWAP}\circ(\Had\otimes\Had)]^{s}$\\
\bottomrule
\end{tabular}
\end{table*}

\begin{figure*}[!t]
\centering\includegraphics[width=0.98\textwidth]{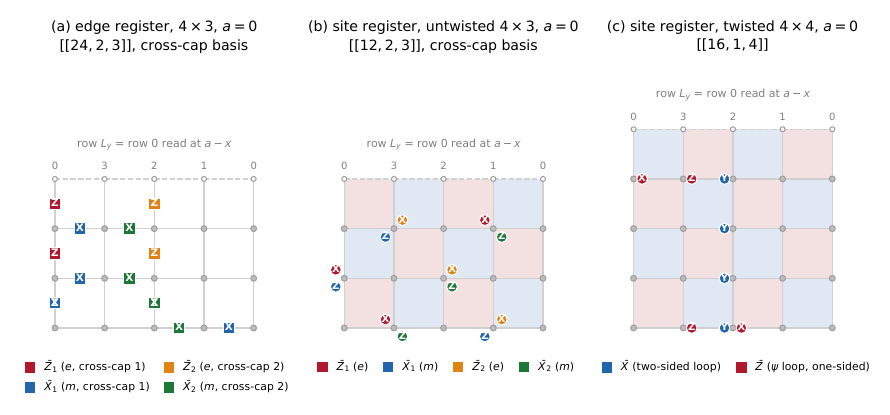}
\caption{Minimum-weight logical operators of the Klein-bottle codes (the glued row is drawn dashed, with the column index $a-x$ at which each site is read in row $0$). (a) Edge register, $4\times3$, $a=0$, cross-cap basis: $\bar Z_1,\bar Z_2$ are $Z$ strings on the columns $x=0$ and $x=2$ (the one-sided cycles $c_1,c_2$, weight $L_y=3$), $\bar X_1,\bar X_2$ are $X$ strings on the dual one-sided cycles, which need one horizontal step (weight $L_y+1=4$) because the dual axis is bond-centred. (b) Site register, untwisted $4\times3$, $a=0$: the $e$ loops ($\bar Z_i$, red/orange) are strings of $X$'s along one diagonal direction and the $m$ loops ($\bar X_i$, blue/green) strings of $Z$'s along the other; both close through the seam in $L_y$ sites. (c) Site register, twisted $4\times4$, $a=0$: $\bar X$ is a loop along the two-sided cycle (a zigzag of $Z$'s and $X$'s between the two seam rows) and $\bar Z$ a column of $Y$'s, the $\psi$ loop along the one-sided cycle.}
\label{fig:codes}
\end{figure*}

\paragraph{Audited affine isometries and code automorphisms.} The symmetry family considered here consists of maps $(x,y)\mapsto(s_xx+c,\,s_yy+c_y)$ of the covering square lattice, $s_x,s_y=\pm1$, that normalize the deck group generated by $(x,y)\mapsto(x+L_x,y)$ and $(x,y)\mapsto(a-x,y+L_y)$. Within this axis-preserving affine family, $s_x=+1$ requires $c\in\{0,h\}$, $s_x=-1$ requires $c\in\{a,a+h\}$, and $c_y$ is taken modulo $L_y$, because translation by $L_y$ acts on the quotient as reflection $x\mapsto a-x$. These are the $8L_y$ candidates used in the audit. The gate statements below refer to this enumerated family and the sizes of Table~\ref{tab:codes}. We do not assert that every automorphism of a finite quotient graph lifts to the covering square lattice, or classify additional graph symmetries at special sizes. In the edge register each listed isometry permutes the stabilizers and is an automorphism of the code as a qubit permutation. In the site register an isometry moves the seam, and the Hadamard dressing must move with it: the map composed with $\Had$ on the rows whose preimage crossed the seam (for the unit row shift, row $0$), and on all rows if the map reverses the orientation of the plane (a reflection reverses the corner pattern of every plaquette, Sec.~\ref{sec:register}), is an automorphism. We checked, with the full Pauli phases---each stabilizer generator must be mapped to a $+1$ element of the stabilizer group---that all $8L_y$ listed candidates are automorphisms of every code in Table~\ref{tab:codes}, with a unique union of rows for the dressing in the site register, which is the one just described in every case.

\paragraph{Logical action.} Two parities of an isometry decide what it does. The \emph{twist} $t=1$ if the map shifts the seam direction by half a period relative to the axis ($x\mapsto x+h$ or $x\mapsto a+h-x$), else $0$; the \emph{colour parity} $s=c+c_y-[s_x=-1]-[s_y=-1]\bmod2$ is the change of the plaquette colour $(x+y)\bmod2$ under the map (the image of the plaquette with south-west corner $(x,y)$ has its south-west corner at $(s_xx+c-[s_x=-1],\,s_yy+c_y-[s_y=-1])$). In the cross-cap basis, with $\bar Z_i$ the $e$ loops, every listed isometry acts on the untwisted site code as $\mathrm{SWAP}^{t}(\Had\otimes\Had)^{s}$ and on the edge code as $\mathrm{SWAP}^{t}$, up to logical Paulis, at every size and offset in Table~\ref{tab:codes}. The twist is the Dehn twist on homology: the half translation maps the column $c_1$ to the column $c_2$, i.e.\ acts on $H_1(K;\Z_2)$ as $\gamma_y\mapsto\gamma_x+\gamma_y$, and is therefore not isotopic to the identity---the translations of the torus all are, and on the torus the audited translation and reflection family acts trivially on the edge register (Table~\ref{tab:codes}). The colour parity is the anyon exchange: the maps that exchange the plaquette colours act as $\Had\otimes\Had$ on the Klein bottle and as $\mathrm{SWAP}\circ(\Had\otimes\Had)$ on the torus in the same $e$-loop convention, for the reason given above. In the edge register the colour parity plays no role: the stabilizers carry no checkerboard, none of the audited isometries exchanges $e$ and $m$, and the $e$--$m$ duality itself---the half-lattice shift followed by $\Had^{\otimes n}$---is not an automorphism: it maps the code with axis offset $a$ onto the code with offset $a-1$ (verified at all five sizes), an isomorphism between the two axis parities, because the dual of a Klein bottle with a site-centred axis has a bond-centred one and no affine map of $\Z^2$ conjugates one deck group into the other. On the twisted code every listed isometry acts as the identity or as one fixed logical Pauli $\bar P$: the unit row shift (with its row of Hadamards) and the row reflection $y\mapsto-y$ act as $\bar P$, the half translation as $\bar P$ for $h$ even and trivially for $h$ odd, and the group generated modulo Paulis is trivial. This is expected: both logical operators of the twisted code, the $\psi$ loop and the loop along the two-sided cycle, are invariant under $e\leftrightarrow m$, and the mapping-class action has nothing to exchange.

\paragraph{Confirmation.} For $n\le20$ we build the exact code space, evaluate the matrix of each of the $304$ automorphisms on it (unitary to $4\times10^{-13}$), and compare the conjugated logical operators with the Pauli-phase prediction (deviation $\le5\times10^{-13}$); on the code space the $\mathrm{SWAP}$ has eigenvalues $\{1,1,1,-1\}$ and $\Had\otimes\Had$ has $\{1,1,-1,-1\}$. In an abstract operation model that counts a qubit permutation or relabelling as one layer, these are depth-one Clifford operations with single-qubit Hadamards, and they preserve the weight of every error. This convention does not give a one-layer implementation by spatially local gates on a two-dimensional hardware layout; routing a permutation may require additional gates and depth. The group they generate modulo Paulis, $\Z_2\times\Z_2$, is the group generated by the mapping-class action on $H_1(K;\Z_2)$ and by the anyon symmetry, consistent with the general constraint that locality-preserving logical gates of a two-dimensional topological code act through the anyon symmetries and the mapping class group \cite{Beverland2016}: the audited untwisted site-code family realizes that group at the stated sizes. The cross-cap Hadamard of Ref.~\cite{Kobayashi2024crosscap} appears here as a row shift with one row of Hadamards, and the Dehn twist, which on the torus is a shear---a permutation of qubits over distances of order $L$ followed by a constant-depth correction \cite{Zhu2020dehn,Lavasani2019}---is on the Klein bottle a rigid lattice translation. No audited isometry gives a logical $S$, and the $\mathrm{SWAP}$ is a $\mathrm{CNOT}$ only in the homology basis $(\gamma_x,\gamma_y)$, a change of tensor-product structure rather than an entangling operation. The twisted code, with its closed duality line, is the poorer code of the two: one logical qubit, no non-Pauli gate from the audited isometries, and a distance limited by the $\psi$ loop.

\section{Chiral matter: a holonomy doublet, its logical operators and its splitting}\label{sec:matter}

The site register of Secs.~\ref{sec:register} and \ref{sec:codes} is the register of the $\pi$-flux $p$-wave superconductor of the companion paper~\cite{Wang2026paper1}. We now couple that matter to the gauge field and ask what becomes of the four Klein-bottle holonomy sectors. The answer, established exactly on three small lattices, is that the matter selects two of the four and leaves a two-fold degenerate doublet whose logical $X$ is the half translation of Sec.~\ref{sec:codes}, and whose splitting under an electric field is, on these lattices, of order $h^{\min(L_x,2L_y)}$: the two-sided cycle costs $L_x$ link flips, and the one-sided cycle, although only $L_y$ links long, must be traversed twice because a single traversal leaves the matter-selected pair. We state precisely what is computed and make no claim beyond the sizes reached.

\paragraph{Model.} On the $L_x\times L_y$ mirror Klein bottle with bond-centred axis, $(x,y+L_y)\sim(1-x,y)$, a $\Z_2$ gauge field $\sigma^z_l=\pm1$ lives on the links and spinless fermions on the sites,
\begin{align}
H={}&K\sum_pB_p-h\sum_l\sigma^x_l\nonumber\\
&+\sum_{\langle ij\rangle}\sigma^z_{ij}\big[-t\,c_i^\dagger c_j+\Delta_{ij}\,c_i^\dagger c_j^\dagger\big]\nonumber\\
&+\sum_{\langle\langle ij\rangle\rangle}\sigma^z_{l_1}\sigma^z_{l_2}\,\Delta_d\,c_i^\dagger c_j^\dagger+\text{h.c.},
\label{eq:merged}
\end{align}
with $B_p=\prod_{l\in p}\sigma^z_l$, $\Delta_{ij}=\Delta_0$ on $x$ bonds and $i\Delta_0$ on $y$ bonds, the diagonal pairing dressed by the two links of an elementary path, and $t=\Delta_0=\Delta_d=1$, $\mu=0$, $K=2$ (the companion paper's $C=-1$ parameters; at $K>0$ the plaquette term favours $\pi$ flux, the companion paper's static background, and we verify below that the all-$\pi$ sectors are the lowest ones). The physical space is the Gauss-law sector $G_v=(\prod_{l\ni v}\sigma^x_l)(1-2n_v)=+1$, of even total fermion parity; we work in a spanning-tree gauge in which the physical space is spanned by the $E-(V-1)$ non-tree links and $V-1$ fermions, $2L_xL_y$ qubits in all ($16$ at $4\times2$, $24$ at $4\times3$ and $6\times2$), the tree-link electric operators becoming products of $\sigma^x$ on the cut links and of the fermion parity of the subtree. The representation was certified against the full Hilbert space on $2\times2$ and $3\times2$ tori ($2\times10^{-14}$) and against a change of tree on the $4\times2$ Klein bottle ($3\times10^{-14}$). This is not the companion paper's model with a dynamical gauge field added: gauge covariance requires the same $\sigma^z$ on hopping and pairing, whereas the static model threads the $\pi$ flux through the hopping alone; the two are cousins, and none of the matter spectra of the four sectors below reproduces the static spectrum literally.

\paragraph{The doublet at $h=0$.} At $h=0$ the link configuration is conserved and the physical space decomposes into $2^{N_T}$ flux sectors ($N_T=E-V+1$ non-tree links: $512$, $8192$ and $8192$ sectors), in each of which the matter problem is quadratic. We solve all of them: the even-parity ground energy of a sector follows from its Bogoliubov--de Gennes spectrum and the parity of the Bogoliubov vacuum (the sign of a Pfaffian), a rule we validate against dense diagonalization of the $2^{V-1}$-dimensional even-parity block in every sector at $4\times2$ and in the lowest and in random sectors at $24$ qubits (agreement $5\times10^{-14}$). The four all-$\pi$ sectors are the four lowest; the lowest sector with violated plaquettes---necessarily a pair, since $\prod_pB_p=1$---lies $\delta_v=6.93$, $7.31$ and $6.87$ above the ground pair at $4\times2$, $4\times3$ and $6\times2$ (Table~\ref{tab:matter}), so that the all-$\pi$ restriction of the perturbation theory below is a statement about the spectrum, not an assumption. The four all-$\pi$ sectors are the holonomy sectors of the Klein bottle, $|H_1(K;\Z_2)|=4$, labelled by the Wilson loop $W_{\rm row}$ along the two-sided cycle and a Wilson loop $W_{\rm col}$ through the seam. The matter splits them $2+2$: the ground pair has $W_{\rm row}=-1$ (antiperiodic along the two-sided cycle) and $W_{\rm col}=\pm1$, and is exactly degenerate because the half translation $x\mapsto x+L_x/2$ maps one member to the other. It is the only lattice symmetry of Eq.~\eqref{eq:merged} on the Klein bottle besides fermion parity within the family we searched exhaustively: the $4L_xL_y$ affine maps $(x,y)\mapsto(s_xx+b,\,s_yy+c_y)$ of the quotient, which contain its $8L_y$ isometries (Sec.~\ref{sec:codes}; all $8L_y$ listed candidates are automorphisms of the gauge complex), each combined with a fermion $U(1)$ phase (multiples of $\pi/2$), an optional sublattice sign, complex conjugation and particle--hole. Only the identity, fermion parity, the half translation and their products survive; the row translations and the reflections, isometries of the lattice, are not symmetries of Eq.~\eqref{eq:merged}, because they move the seam, across which the chirality of the pairing is reversed. The upper pair lies $\delta=0.2189$, $0.6295$ and $0.0196$ above at $4\times2$, $4\times3$ and $6\times2$, with matter gaps $0.029$, $0.347$ and $0.245$ inside the ground sectors (Table~\ref{tab:matter}); the antiperiodic selection is the gauge-theory form of the companion paper's antiperiodic seam sector, and its weakening at $L_x=6$ is consistent with that paper's $L_x\bmod4$ rule, which moves the near-zero seam pair into the antiperiodic sector for $L_x\equiv2\pmod4$. On the doublet the seam-crossing Wilson loop $W_{\rm col}$ is diagonal and the half translation $T_{L_x/2}$ exchanges the two states, so $\bar Z=W_{\rm col}$, $\bar X=T_{L_x/2}$: in the language of Sec.~\ref{sec:codes}, the matter selects a two-dimensional subspace of the four-dimensional Klein-bottle code space, and the Dehn twist, which acts on the full code space as the $\mathrm{SWAP}$ of the two cross-cap qubits, acts on the selected pair as the logical $X$.

\begin{table}[!t]
\caption{The holonomy doublet of the merged model, Eq.~\eqref{eq:merged}. For each lattice: $E_0$, ground energy of the doublet at $h=0$; $\delta$, energy of the upper pair above it; $\Delta_m$, matter gap inside a ground sector; $\delta_v$, energy of the lowest sector with violated plaquettes above the doublet (all $2^{N_T}$ flux sectors enumerated). Splitting $\Delta(h)$ of the doublet under an electric field on the $y$ links ($y$), on the $x$ links ($x$) or on all links ($xy$): order $n$ and coefficient $2|m|$ of the exact leading-order perturbation theory, $\Delta_{\rm pert}=2|m|h^n$; the exact-diagonalization value $\Delta/h^n$ at the smallest field computed; and the local power $d\ln\Delta/d\ln h$ from the smallest to the largest field of the ranges listed in Appendix~\ref{app:methods}, first and last value.}
\label{tab:matter}
\centering\scriptsize\setlength{\tabcolsep}{3pt}
\begin{tabular}{llllll}
\toprule
size & field & $n$ & $2|m|$ & $\Delta/h^n$ (ED) & local power\\
\midrule
\multicolumn{6}{l}{$4\times2$: $E_0=-26.3370$, $\delta=0.219$, $\Delta_m=0.029$, $\delta_v=6.929$}\\
 & $y$ & $4$ & $3.387\times10^{-3}$ & $3.385\times10^{-3}$ & $4.00\to3.98$\\
 & $x$ & $4$ & $0.6533$ & $0.6534$ & $4.00\to3.95$\\
 & $xy$ & $4$ & $0.6500$ & $0.6500$ & $4.00\to3.79$\\
\multicolumn{6}{l}{$4\times3$: $E_0=-39.9043$, $\delta=0.629$, $\Delta_m=0.347$, $\delta_v=7.309$}\\
 & $y$ & $4$ & $2.090\times10^{-2}$ & --- & ---\\
 & $x$ & $6$ & $3.276\times10^{-3}$ & $3.368\times10^{-3}$ & $6.08\to6.17$\\
 & $xy$ & $4$ & $2.090\times10^{-2}$ & $2.099\times10^{-2}$ & $4.02\to4.47$\\
\multicolumn{6}{l}{$6\times2$: $E_0=-39.6434$, $\delta=0.020$, $\Delta_m=0.245$, $\delta_v=6.867$}\\
 & $y$ & $6$ & $3.432\times10^{-3}$ & $3.410\times10^{-3}$ & $5.98\to5.89$\\
 & $x$ & $4$ & $0.7095$ & $0.7096$ & $3.99\to3.97$\\
 & $xy$ & $4$ & $0.7095$ & --- & ---\\
\bottomrule
\end{tabular}
\end{table}

\paragraph{Two tunnelling routes.} An electric field $-h\sum_l\sigma^x_l$ splits the doublet by tunnelling between the two holonomy sectors, and the leading order is the smallest number of link flips that maps one sector onto the other. On the Klein bottle there are two such sets. Flipping the $L_x$ $y$ links of one row gap---a dual loop along the two-sided cycle---changes $W_{\rm col}$ directly, at order $h^{L_x}$. Flipping the $L_y$ $x$ links of a mirror-fixed column ($x=0$ or $x=L_x/2$)---a dual loop along a one-sided cycle---changes $W_{\rm row}$ and lands in the upper pair; a second such loop through the other fixed column returns to $W_{\rm row}=-1$ with $W_{\rm col}$ changed, at order $h^{2L_y}$, with the small energy $\delta$ as one of the denominators. The isotropic field therefore gives $\Delta\propto h^{\min(L_x,2L_y)}$, which is $h^4$ at all three sizes; the two routes are separated by fields on one link type only. We evaluate the leading order exactly: with $H_0$ the $h=0$ Hamiltonian, $|0\rangle,|2\rangle$ the ground states of the two sectors and $E_0$ their energy, the effective coupling is
\begin{align}
\langle2|H_{\rm eff}|0\rangle&=m\,h^{n},\nonumber\\
m&=\sum_{S}\sum_{\pi\in\mathfrak S_n}\Big\langle2\Big|\nonumber\\
&\qquad\prod_{k=1}^{n}\Big[(-\sigma^x_{l_{\pi(k)}})\,\frac{1}{E_0-H_0}\Big]\Big|0\Big\rangle ,
\label{eq:pert}
\end{align}
summed over the connecting link sets $S$ of size $n$ and all orderings of their flips (the last resolvent omitted), the resolvents acting in the intermediate flux sectors on the full even-parity matter space (dimension $2^{V-1}$, dense); the splitting is $\Delta_{\rm pert}=2|m|h^n$, the diagonal terms being equal by the half-translation symmetry. The connecting sets are found by enumeration: at $4\times2$ the two row gaps and the pair of fixed columns; at $4\times3$ the three row gaps at order $4$ and the pair of fixed columns at order $6$; at $6\times2$ the pair of fixed columns at order $4$ and the two row gaps at order $6$. Exact diagonalization at $h>0$ works in the full physical space of $2L_xL_y$ qubits, all flux sectors included, and resolves the doublet by the half-translation symmetry (the two states are the ground states of the two symmetry sectors): ARPACK for $16$ qubits, and Lanczos with symmetry projection at every step for the $24$-qubit lattices, converged to a residual $r<10^{-5}$. The Ritz value is a rigorous upper bound on the ground energy; for the error below it we use the Kato--Temple estimate $r^2/g$, with $g$ the gap to the next level of the same symmetry sector, taken from a second, deflated Lanczos run at representative fields (Appendix~\ref{app:methods}): $g$ is the distance to the upper pair at $6\times2$ ($0.019$--$0.020$) and the matter gap at $4\times3$ ($0.35$), and the estimates are below $5\times10^{-9}$, more than two orders of magnitude below the smallest $24$-qubit splitting quoted ($6.6\times10^{-7}$, at $4\times3$ and $h=0.075$). They are rigorous bounds under the interval condition stated in Appendix~\ref{app:methods}, which the deflated spectra support but which we have not proved.

Table~\ref{tab:matter} and Fig.~\ref{fig:matter} compare the two. Wherever the field is small enough for the leading order to dominate, the exact coefficient is reproduced: to $0.1\%$ or better at $4\times2$ for all three fields, to $0.4\%$ at $4\times3$ for the isotropic field (the $h^4$ term is the two-sided loop, whose coefficient $2.090\times10^{-2}$ matches the value $2.099\times10^{-2}$ measured at $h=0.075$; the one-sided route enters at $h^6$ and accounts for the drift of the local power from $4.02$ to $4.47$), to $0.6\%$ at $6\times2$ for the $y$ field, where the power is $6=L_x$: $5.98,5.96,5.93,5.89$ between $h=0.3$ and $0.9$. The one-sided route shows its own order separately, $6=2L_y$ at $4\times3$ ($3.37\times10^{-3}$ against $3.28\times10^{-3}$ at $h=0.3$, local powers $6.08$ and $6.17$) and $4=2L_y$ at $6\times2$ ($0.7096$ against $0.7095$ at $h=0.05$, local powers $3.99$ and $3.97$). The $4\times2$ lattice, where both routes are of order four and the fixed-column route dominates by two orders of magnitude, is the case in which a naive reading of the isotropic power as $L_x$ would have been wrong; the anisotropic fields and the exact coefficients are what make the attribution unambiguous.

\begin{figure*}[!t]
\centering\includegraphics[width=0.98\textwidth]{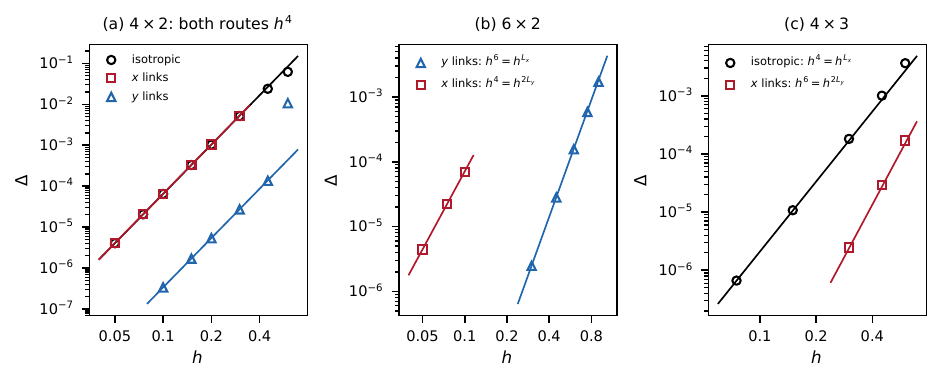}
\caption{Splitting of the holonomy doublet of the merged model against the field, exact diagonalization (symbols) and the exact leading-order perturbation theory $2|m|h^n$ of Eq.~\eqref{eq:pert} (lines). (a) $4\times2$: isotropic field and fields on the $x$ or $y$ links only; all three are $h^4$, the fixed-column route dominating. (b) $6\times2$: the $y$ field gives $h^{6}=h^{L_x}$, the $x$ field $h^{4}=h^{2L_y}$. (c) $4\times3$: isotropic field, $h^{4}=h^{L_x}$ with the $h^6$ correction of the fixed-column route, and the $x$ field, $h^{6}=h^{2L_y}$.}
\label{fig:matter}
\end{figure*}

\paragraph{Status.} On the lattices computed the doublet behaves as a holonomy qubit of the toric code: its splitting is of order $h^\ell$ with $\ell$ the length of the shortest dual loop that changes the holonomy, and the Klein bottle with chiral matter adds two features to that statement---the matter selects the antiperiodic pair, and the one-sided cycle, although the shorter of the two, enters only at twice its length because a single traversal leaves the selected pair. Its logical operators are a Wilson loop and a lattice translation; no other gate is available from Sec.~\ref{sec:codes} on the selected pair, since the colour-exchanging isometries are not symmetries of Eq.~\eqref{eq:merged}. We have reached three lattices with $L_x\le6$, $L_y\le3$; the selection energy $\delta$ is not monotone in size and is small at $6\times2$, the perturbative regime shrinks with $\delta$, and we do not know the fate of the doublet at large $L_x$ and $L_y$ together. The statements of this section are exact for the lattices computed and are offered as the gauge-theory counterpart of the companion paper's seam sector, not as a proposal for a protected qubit.

\section{Discussion}\label{sec:discussion}

\paragraph{Relation to twist defects and to parity-twisted topological order.} The dressed reflection gluing of the site register with $a+L_y$ even is a Klein bottle whose non-orientable cycle exchanges $e$ and $m$; its seam is a closed duality line of the kind whose open end points are the Ising-type twist defects of Refs.~\cite{Bombin2010,YouWen2012,BarkeshliJianQi2013,TeoRoyChen2014}. Being closed and end-point free, it binds no Majorana zero modes, and the code it defines is a stabilizer code with $\GSD=2$, the number of parity-invariant anyons \cite{ChanTeoRyu2016}. The edge register gives the untwisted Klein bottle, $\GSD=4$. Neither statement is new in substance; what the lattice comparison adds is that the two are the same lattice, the same surface and the same topological order, differing only in the register, and that a single colour-holonomy parity decides between them. The obstructed gluing is a third object: not a symmetry quotient at all, a local Hamiltonian in which the identification of the reversed corner pattern is made without the compensating duality, and whose seam is then forced to carry a pair of Majorana rings. We regard it as the stabilizer-model counterpart of the statement, proved for a chiral superconductor on the same real-space Klein bottle in the companion paper~\cite{Wang2026paper1}, that a gluing incompatible with the bulk symmetry action resolves itself in a localized seam structure rather than in an obstruction of the bulk.

\paragraph{The rings.} The exact solution rests on three facts: the anticommutation graph of the literal gluing is a union of cycles; a cycle of anticommuting involutions is a Majorana ring; and the bulk stabilizers lock the ring sectors. The resulting levels---$E_0$, the $(\mathrm P,\mathrm P)$ gap $4\tan(\pi/2L_x)$, the one-mode excitation $4\sin(\pi/L_x)$ (when the ring parities are not fixed by the bulk), the mixed sector at $2+2\tan(\pi/2L_x)$ for even $L_x$, and the one-mode gap $4\sin(\pi/2L_x)$ of the single $2L_x$-ring for odd $L_x$---are all reproduced on the lattice, as are the $24$-qubit energies. For the glide with odd $h$ the seam is four odd cycles, both product sectors of an odd cycle are degenerate, and the lowest excitation is a single violated bulk plaquette at energy $2$: the odd-cycle seam absorbs a single anyon, which on a torus could only be created in pairs. The rings are spectral objects of a frustrated Hamiltonian, with an exact but perturbatively fragile degeneracy; they are not the Majorana zero modes of twist-defect end points, and we do not identify them with any protected degree of freedom.

\paragraph{Relation to the chiral matter on the same surface.} The site register is the register of the $\pi$-flux fermions that motivate this work: on the same Klein bottle $(x,y+L_y)\sim(a-x,y)$, the chiral $p$-wave state of that model cannot be glued uniformly and forms a $\Delta C=2$ chirality wall along the seam, carrying two co-propagating Majorana branches whose finite-size levels select an antiperiodic-type sector depending on $L_x\bmod4$ (Ref.~\cite{Wang2026paper1}). The gauge-theory seam found here---two Majorana rings, one per column parity, for even $L_x$, or a single ring of twice the length for odd $L_x$, locked to the antiperiodic sector---sits at the same place and shares the antiperiodic structure, and for even $L_x$ the two-fold one. We record this as a parallel, not as an identification: the two objects are a chiral edge theory of a gapped two-dimensional superconductor and a frustrated stabilizer algebra, obtained from different Hamiltonians, and nothing in this paper relates them dynamically.

\paragraph{What is and is not established for the matter-coupled model.} Section~\ref{sec:matter} establishes, on three lattices, the ground-state structure of the gauged model in all flux sectors, the logical operator algebra of the selected doublet, and the order and the exact leading coefficient of its splitting under the electric field, with the two tunnelling routes identified and separated. It does not establish the fate of the doublet at large sizes (the selection energy $\delta$ is $0.63$ at $4\times3$ and $0.02$ at $6\times2$), a ledger of all local gauge-invariant perturbations (only the electric field is studied; matter perturbations that respect the gauge symmetry cannot change holonomies except through the same loops, but their coefficients are not computed), or anything about the relation between the selected antiperiodic pair and the seam levels of the companion paper beyond the parallel noted there. We also have not addressed interactions beyond the stabilizer terms, larger classes of gluings (crosscap insertions, gluings with both a reflection and a shift), or the $\Z_N$ generalization in which the odd-cycle seams would become parafermionic. On the code side we have computed exact distances and gates but no thresholds, decoders or noise models; the distances of Table~\ref{tab:codes} are those of a surface code of the same linear size, and we claim no advantage in overhead, only that the two stated topological actions are realized by the audited lattice permutations and Hadamard dressings in the abstract operation model.

\paragraph{Design lesson.} For code design the statement is simple: a non-orientable identification of a surface code produces a duality twist line or not depending on the register, and in the site register the parity of the axis offset plus the row number decides. A closed twist line costs nothing in stabilizer weight, halves the logical space and removes every non-Pauli gate in the audited isometry family; the untwisted Klein bottle keeps two logical qubits, one per cross-cap, on which the half translation is the $\mathrm{SWAP}$ and---in the site register only---a row shift with one row of Hadamards is $\Had\otimes\Had$. Both have depth one only in the permutation-as-one-layer convention stated above; no local-hardware depth claim follows. Together they realize the group that the mapping-class action and the anyon symmetry can give; a logical $S$ or a genuinely entangling gate between the cross-caps is not among them. An undressed identification is not a code but an exactly solvable frustrated model whose seam is a pair of Majorana rings. On the small lattices we could reach, chiral matter coupled to the gauge field selects a pair of holonomy sectors and turns the Dehn twist into the logical $X$ of a single doublet, whose splitting is that of the toric code with the one-sided loop counted twice; whether this survives at large sizes is open.

\section*{Acknowledgments}
The numerical work was designed and directed by the author. Code, figures and drafting were produced with AI assistance (Anthropic Claude); successive drafts of the manuscript were subjected to internal AI cross-review by separate OpenAI Codex instances, with the review reports and the author's point-by-point verifications retained in the original research archive (Appendix~\ref{app:code}); the sign frustration of the odd-$L_x$ dressed gluing and the parity condition on the single-ring-mode level (Secs.~\ref{sec:register} and \ref{sec:seam}) were found by an independent $\Z_N$ re-implementation of the models, calibrated against this paper's result files, and are corrected here. These computational and AI checks are internal reviews, not independent human peer review; this release does not claim independent human verification. The author takes full responsibility for the content. This work received no external funding.

\appendix

\section{Methods and sizes}\label{app:methods}
\emph{GF(2) tables.} Edge register: torus at $(L_x,L_y)=(4,4)$, $(6,4)$, $(4,6)$, $(6,6)$, $(8,4)$, $(4,5)$, $(6,5)$, $(5,4)$; mirror gluing with $a=0,1$ at the same sizes; sheared torus at the even-$L_x$ sizes. Site register: torus at $(4,4)$, $(6,4)$, $(4,6)$, $(6,6)$, $(5,4)$, $(4,5)$, $(5,5)$, $(8,4)$, $(7,4)$; sheared torus at $(4,4)$, $(6,4)$, $(4,5)$, $(6,5)$, $(8,4)$, $(8,5)$, $(10,4)$, $(4,6)$, $(6,6)$; mirror gluing, literal and dressed, with $a=0,1,2,3$ at $L_x\in\{4,6,8,10,12\}$ and $L_y\in\{3,4,5,6\}$ (thirteen sizes) and at the odd-$L_x$ sizes $(5,3)$, $(5,4)$, $(5,5)$, $(7,3)$, $(7,4)$, $(9,3)$, $(9,4)$, $(11,4)$ (twenty-one sizes in all); glide quotient, literal and dressed, at $(4,3)$, $(4,4)$, $(6,3)$, $(6,4)$, $(8,4)$, $(8,5)$, $(10,4)$, $(12,4)$. The rank is computed by Gaussian elimination over $\mathrm{GF}(2)$; anticommutation is the symplectic product of the rows; for every commuting set the phase of each relation (basis of the left null space, multiplied out with phases) is recorded, and the sign-frustrated odd-$L_x$ dressed gluing is diagonalized densely at $3\times3$, $3\times4$ ($a=0,1$) and $5\times2$ and by Lanczos at $5\times3$, with and without one plaquette sign reversed. \emph{Symmetry test.} On the $4\times4$ torus in the site register ($2^{16}$ states) the exact ground space is built by projecting random vectors with $\prod_p\tfrac12(1+F_p)$ and orthonormalizing; a candidate operation $U$ (a site permutation, optionally followed by $\Had^{\otimes N}$) is tested by $\max|UHv-HUv|/\max|Hv|$ on a random $v$, and its ground-space matrix $\langle g_i|U|g_j\rangle$ is diagonalized. \emph{Spectra.} Dense diagonalization for $n=12$; ARPACK Lanczos (real symmetric, $60$--$140$ Ritz values, $\mathrm{ncv}$ up to $290$) for $n=15,16,18,20,21$; a three-vector Lanczos iteration without reorthogonalization for the two $24$-qubit ground energies (converged to $5\times10^{-14}$ after $37$ and $32$ iterations). \emph{Central elements.} Ring products and alternating products are formed as Pauli strings with phases and evaluated on the exact eigenvectors. \emph{Degeneracy.} The ground projector of a cycle of length $L_c$ in a given sector is expanded in the $2^{L_c-1}$ ordered products $M_T=\prod_{j\in T}O_j$, $T\subseteq\{1,\dots,L_c-1\}$, with $c_T=\Tr(M_T^\dagger P)/2^m$ evaluated in the Jordan--Wigner representation on $m=\lceil L_c/2\rceil$ qubits (Appendix~\ref{app:cycles}; the expansion reproduces $P$ to $10^{-10}$), and multiplied by the idempotent $\tfrac12(1+\Pi_c/\pi_c)$ of the sector. The same products of the physical seam plaquettes are formed as Pauli strings with their phases, multiplied over the cycles and over the sector combinations of least energy that have nonzero trace, and every string is tested for membership in the $\mathrm{GF}(2)$ span of the bulk plaquettes; a member contributes $2^{n-N_b}$ times its coefficient times the phase of the matching product. The largest expansion has $2^{23}$ strings ($10\times4$ glide). Version 0.3 of the code built the cycle projector instead as the Lagrange polynomial in $H_c$ that vanishes on the other eigenvalues; that expansion is numerically unstable for long cycles (it returned $130.4$ for the $14$-cycle of the $7\times3$ mirror gluing, whose degeneracy is $128$) and is kept only as a cross-check for cycles of length $\le10$, where the two methods agree to $10^{-5}$. \emph{Perturbations.} Dense diagonalization at $4\times3$ with single-site $X$, $Y$ or $Z$ fields of strength $h\le0.08$, uniform or random, added to the seam rows $y=0,L_y-1$ or to the bulk row $y=1$. \emph{Codes (Sec.~\ref{sec:codes}).} Logical operators: a basis of the normaliser $N(S)$ from the $\mathrm{GF}(2)$ null space of the symplectic Gram matrix, reduced modulo $S$ and brought to symplectic form by Gram--Schmidt. Distances: for the site register all $2^{\,n-k}$ elements of the stabilizer group are enumerated as bit masks and the minimum weight of every logical coset is taken ($n\le24$); for the edge register, a CSS code, the $X$-type and $Z$-type parts are enumerated separately ($n\le48$); at $6\times5$ in the site register all $\sum_{w\le5}\binom{30}{w}3^w$ Pauli strings of weight $\le5$ are tested for commutation with the stabilizers and for non-membership in $S$. Automorphisms: each candidate operation (a qubit permutation induced by a lift of a lattice isometry, followed by $\Had$ on a set of rows) is applied to every stabilizer generator as a Pauli string with its phase, and the image is solved for as a product of generators; all $2^{L_y}$ unions of rows are tried for the site register and $\{\emptyset,\text{all}\}$ for the edge register. Logical actions are read off from the symplectic products of the conjugated logical generators with the basis, with the sign from the phases; the cross-cap basis is found by searching $\mathrm{Sp}(4,2)$ for a change of basis in which every automorphism acts as $1$, $\mathrm{SWAP}$, $\Had\otimes\Had$ or their product, with $\bar Z_i$ of $e$ type (the colour of the plaquettes violated by the cut loop). State-vector confirmation for $n\le20$: the exact code space from the stabilizer projector, the matrix of the operation on it, and the conjugated logical operators compared with the prediction. \emph{Merged model (Sec.~\ref{sec:matter}).} Lattices $4\times2$ ($16$ qubits), $4\times3$ and $6\times2$ ($24$ qubits), $K=2$, bond-centred axis. All $2^{N_T}$ flux sectors by enumeration of the non-tree link configurations: the matter energy of a sector from its $2V\times2V$ Bogoliubov--de Gennes matrix, $E=-\tfrac12\sum_k\epsilon_k$ plus the smallest $\epsilon_k$ when the Bogoliubov vacuum has odd parity (the sign of the Pfaffian of the Majorana form), validated against the dense even-parity block ($2^{V-1}$) in all $512$ sectors at $4\times2$ and in $64$ sectors (the $24$ lowest and $40$ random) at $4\times3$ and $6\times2$; the four all-$\pi$ sectors are also diagonalized densely. Symmetries by exhaustive search over the $4L_xL_y$ affine maps of the quotient (which contain the $8L_y$ isometries; the code checks that exactly $8L_y$ of them are automorphisms of the link--plaquette complex) combined with fermion $U(1)$ phases, sublattice signs, complex conjugation and particle--hole. Leading-order perturbation theory, Eq.~\eqref{eq:pert}: the connecting link sets are found by enumerating all subsets of the allowed links up to the order $\max(L_x,2L_y)$ and keeping those whose flips map sector $0$ to sector $2$; the sum over orderings is organised as a recursion over subsets, one dense solve of dimension $2^{V-1}$ per subset. Exact diagonalization: for $16$ qubits, ARPACK on the operator projected onto a half-translation sector (tolerance $10^{-13}$); for $24$ qubits, restarted block Lanczos (block $8$, full in-block reorthogonalisation, projection onto the symmetry sector at every step, locally optimal restart), converged to residual $r<10^{-5}$, warm-started from the matter ground state of a sector or from a neighbouring field. The gap $g$ to the next level of the same symmetry sector, needed for the bound $r^2/g$ on the ground energy, comes from a second Lanczos run deflated against the converged ground vector, started from the first excited matter state of the ground sector plus the ground states of both upper-pair sectors plus a random vector (so that every state of the sector has a nonzero overlap; the two upper-pair sectors are individually invariant under the half translation, with $T=+1$ and $-1$ respectively) and converged to residual $r_1\le10^{-3}$, with Rayleigh quotient $\theta_1$: $\theta_1-\theta_0=0.0196/0.0195$ ($6\times2$, $y$, $h=0.3$, $T=\pm1$), $0.0199/0.0198$ ($6\times2$, $x$, $h=0.1$), $0.3467/0.3467$ ($4\times3$, $x$, $h=0.3$). Kato--Temple, $\lambda_1\ge\theta_0-r^2/(\beta-\theta_0)$ for any $\beta\le\lambda_2$, is applied with the conservative choice $\beta=\theta_1-r_1$, which is a lower bound on $\lambda_2$ provided no eigenvalue of the sector lies in $(\theta_0,\theta_1-r_1)$; this interval condition is what the deflated run indicates (its start vector overlaps every state of the sector, and its Rayleigh quotient decreases monotonically to $\theta_1$) but is not a theorem. With it the error estimates are $4.5\times10^{-9}$, $1.1\times10^{-9}$, $3.0\times10^{-9}$, $1.3\times10^{-9}$, $2.8\times10^{-11}$, $2.7\times10^{-10}$ for the six ground energies (\texttt{results/matter.json} records $\theta_1$, $r_1$, $r$ and both the estimate with $\beta=\theta_1$ and the conservative one). Fields: $4\times2$, $h=0.05$--$0.6$ (isotropic and $x$), $0.1$--$0.6$ ($y$); $4\times3$, $h=0.075$--$0.6$ (isotropic) and $0.3$--$0.6$ ($x$); $6\times2$, $h=0.3$--$0.9$ ($y$) and $0.05$--$0.1$ ($x$). Local powers are logarithmic slopes between consecutive fields; for $4\times2$ they are quoted up to $h=0.45$, beyond which the leading order no longer dominates (at $h=0.6$ the $y$-field splitting jumps by two orders of magnitude, a level crossing inside the symmetry sector made possible by the small matter gap $0.029$; the point is shown in Fig.~\ref{fig:matter} but not used).

\section{Cycle algebras}\label{app:cycles}
In the Jordan--Wigner representation $\gamma_{2j}=X_jZ_{j-1}\cdots Z_0$, $\gamma_{2j+1}=Y_jZ_{j-1}\cdots Z_0$ on $\lceil L/2\rceil$ qubits, the operators $O_j=i\gamma_j\gamma_{j+1}$ ($j<L$) and $O_L=\pm i\gamma_L\gamma_1$ realize the cycle algebra in the two sectors of $\Pi$. Diagonalizing $-\sum_jO_j$: for $L=4,6,8,10,14$ the sector with the antiperiodic product has $E_0=-2.828427,-4,-5.226252,-6.472136,-8.987918$, unique, and the other sector has $E_0=-2,-3.464102,-4.828427,-6.155367,-8.762573$, two-fold degenerate, equal to Eq.~\eqref{eq:ring}; for $L=3,5,7$ both sectors have $E_0=-1.732051,-3.077684,-4.381286=-\cot(\pi/2L)$. The single-particle picture behind Eq.~\eqref{eq:ring}: $-\sum_ji\gamma_j\gamma_{j+1}$ has modes of energy $2\sin k$, occupied for $k\in(0,\pi)$ in the ground state, at $k=(2m+1)\pi/L$ (antiperiodic, $\sum_m\sin k=1/\sin(\pi/L)$) or $k=2\pi m/L$ (periodic, $\sum_m\sin k=\cot(\pi/L)$, with the zero modes at $k=0,\pi$).

\section{Code, data and reproducibility}\label{app:code}
The accompanying release is titled \emph{Paper 3 reproducibility package: An exactly solvable obstructed seam on a lattice Klein bottle}. Version 1.0.0 was published on Zenodo on 19 October 2026, DOI: \href{https://doi.org/10.5281/zenodo.23476997}{10.5281/zenodo.23476997}. Article text, original figures, package documentation and original result data use CC BY 4.0; original reproduction code uses the MIT License. Third-party dependencies retain their own licences.

The portable package preserves the original layout beneath \texttt{evidence/project/}. The directory \texttt{papers/paper3/code/} contains the register and Pauli algebra library, the register, seam, exact-degeneracy, perturbation, 24-qubit and logical-code scripts. Its \texttt{matter/} subpackage includes all engines, symmetry routines, Lanczos implementations, drivers and the summary reader required by the documented chain. Original saved results and historical run logs are preserved with source manifests. Internal review reports and manuscript development snapshots remain in the original research archive; they are not required inputs to the public package.

An explicit Python interpreter running \texttt{run.py quick} performs the documented representative replays, while \texttt{read} prints the preserved numbers. The \texttt{full} mode creates a fresh temporary working copy and executes both original stages without reusing saved outputs or checkpoints. Historical estimates are about two hours for the register and seam stage and about six hours for the matter stage, with about 5\,GB for each 24-qubit calculation. This release did not repeat those full chains. Fresh-extraction quick checks used CPython~3.13.5, NumPy~2.1.3, SciPy~1.15.3, Matplotlib~3.10.0, Numba~0.61.0 and llvmlite~0.44.0; the historical Python~3.11 environment is recorded separately. The package specifies dependency versions, cache isolation, numerical tolerances and the exact coverage of each mode.

The historical matter summary records source and result fingerprints. All 47 recorded raw-result hashes and all eight matter engine, driver and summarizer hashes match the preserved files. Three fingerprints for register or reporting files differ from the current tree, reflecting subsequent register and reporting revisions. The current matter engines do not import those three files; representative matter spectra and the complete saved-data summary were checked again. The unavailable earlier versions of those three files are not claimed to have been reconstructed. The package records this provenance difference separately from the numerical quick-check outcome. The original Lanczos and Kato--Temple qualifications in Appendix~\ref{app:methods} remain in force.

The source bundle includes the figures, local bibliography and matching \texttt{main.bbl}. The release changes the layout of the spectrum figure's legend only; its plotted arrays and axis ranges are checked against the original. No personal filesystem path is required for either the manuscript or the portable numerical package.

\setlength{\bibsep}{2pt}
\bibliographystyle{unsrtnat}
\bibliography{references}
\end{document}